\let\saved@includegraphics\includegraphics
\newcommand{\ncor}[1]{{\textcolor{black}{#1}}}
\newcommand{\etk}{\textsc{Einstein Toolkit }}
\newcommand{\carpet}{\textsc{Carpet }}
\newtheorem{theorem}{Theorem}[section]
\newtheorem{lemma}[theorem]{Lemma}
\newcommand{\nrsur}{\texttt{NRSur7dq4}}
\newcommand{\muB}{\mu_{\rm B}}
\begin{document}


\title{
Gravitational-wave parameter inference with the Newman-Penrose scalar $\psi_4$}


\author{Juan Calder\'on~Bustillo}
\email{juan.calderon.bustillo@gmail.com}
\affiliation{Instituto Galego de F\'{i}sica de Altas Enerx\'{i}as, Universidade de
Santiago de Compostela, 15782 Santiago de Compostela, Galicia, Spain}
\affiliation{Department of Physics, The Chinese University of Hong Kong, Shatin, N.T., Hong Kong}
\author{Isaac C.F. Wong}
\email{cfwong@link.cuhk.edu.hk }
\affiliation{Department of Physics, The Chinese University of Hong Kong, Shatin, N.T., Hong Kong}
\author{Nicolas Sanchis-Gual}
\affiliation{Departamento de Astronom\'{i}a y Astrof\'{i}sica, Universitat de Val\`{e}ncia,
Dr. Moliner 50, 46100, Burjassot (Val\`{e}ncia), Spain}
\affiliation{Departamento  de  Matem\'{a}tica  da  Universidade  de  Aveiro  and  Centre  for  Research  and  Development in  Mathematics  and  Applications  (CIDMA),  Campus  de  Santiago,  3810-183  Aveiro,  Portugal}
\author{Samson H. W. Leong}
\affiliation{Department of Physics, The Chinese University of Hong Kong, Shatin, N.T., Hong Kong}
\author{Alejandro Torres-Forn\'e}
\affiliation{Departamento de Astronom\'{i}a y Astrof\'{i}sica, Universitat de Val\`{e}ncia,
Dr. Moliner 50, 46100, Burjassot (Val\`{e}ncia), Spain}
\affiliation{Observatori Astron\`{o}mic, Universitat de Val\`{e}ncia,
C/ Catedr\'{a}tico Jos\'{e} Beltr\'{a}n 2, 46980, Paterna (Val\`{e}ncia), Spain}
\author{Koustav Chandra}
\affiliation{Department of Physics, Indian Institute of Technology Bombay, Powai, Mumbai, Maharashtra 400076, India}
\author{Jos\'e A. Font}
\affiliation{Departamento de Astronom\'{i}a y Astrof\'{i}sica, Universitat de Val\`{e}ncia,
Dr. Moliner 50, 46100, Burjassot (Val\`{e}ncia), Spain}
\affiliation{Observatori Astron\`{o}mic, Universitat de Val\`{e}ncia,
C/ Catedr\'{a}tico Jos\'{e} Beltr\'{a}n 2, 46980, Paterna (Val\`{e}ncia), Spain}
\author{Carlos Herdeiro}
\affiliation{Departamento  de  Matem\'{a}tica  da  Universidade  de  Aveiro  and  Centre  for  Research  and  Development in  Mathematics  and  Applications  (CIDMA),  Campus  de  Santiago,  3810-183  Aveiro,  Portugal}
\author{Eugen Radu}
\affiliation{Departamento  de  Matem\'{a}tica  da  Universidade  de  Aveiro  and  Centre  for  Research  and  Development in  Mathematics  and  Applications  (CIDMA),  Campus  de  Santiago,  3810-183  Aveiro,  Portugal}
\author{Tjonnie G.F. Li}
\affiliation{Department of Physics, The Chinese University of Hong Kong, Shatin, N.T., Hong Kong}
\affiliation{Institute for Theoretical Physics, KU Leuven, Celestijnenlaan 200D, B-3001 Leuven, Belgium}
\affiliation{Department of Electrical Engineering (ESAT), KU Leuven, Kasteelpark Arenberg 10, B-3001 Leuven, Belgium}

\begin{abstract}
    Detection and parameter inference of gravitational-wave signals \ncor{from compact mergers} rely on the comparison of the incoming detector strain data $d(t)$ to waveform templates for the gravitational-wave strain $h(t)$ that ultimately rely on the resolution of Einstein's equations via numerical relativity simulations. These, however, commonly output a quantity known as the Newman-Penrose scalar $\psi_4(t)$ which, under the Bondi gauge, is related to the gravitational-wave strain by $\psi_4(t)=\mathrm{d}^2h(t) / \mathrm{d}t^2$. Therefore, obtaining strain templates involves an integration process that introduces artefacts that need to be treated in a rather manual way. By taking second-order finite differences on the detector data and inferring the corresponding background noise distribution, we develop a framework to perform gravitational-wave data analysis directly using $\psi_4(t)$ templates. We first demonstrate this formalism, and the impact of integration artefacts in strain templates, through the recovery of numerically simulated signals from head-on collisions of Proca stars injected in Advanced LIGO noise. Next, we re-analyse the event GW190521 under the hypothesis of a Proca-star merger, obtaining results equivalent to those in Ref.~\cite{Bustillo:2021proca1}, where we used the classical strain framework. We find, however, that integration errors would strongly impact our analysis if GW190521 was four times louder. Finally, we show that our framework fixes significant biases in the interpretation of the high-mass GW trigger S200114f arising from the usage of strain templates. We remove the need to obtain  strain waveforms from numerical relativity simulations, avoiding the associated systematic errors. 
\end{abstract}

\maketitle

\section{Introduction} The observation of the gravitational-wave (GW) event GW150914 in 2015 by the Advanced LIGO detectors \cite{AdvancedLIGOREF} opened a new window to explore the Universe \cite{GW150914}. In barely half a decade, and after the addition of the Advanced Virgo \cite{TheVirgo:2014hva} and KAGRA \cite{akutsu2020overview} detectors, the number of detections has grown to 90 events, all consistent with the merger of compact objects such as black holes (BHs) and neutron stars (NSs) \cite{GWTC1,GWTC2,GWTC3,GW170817}. These events have provided us with invaluable knowledge about the BH population of our Universe \cite{GWTC2-pop,GWTC3-pop} and their environments, star formation, tests of strong gravity~\cite{TGR_GWTC2,GWTC3-TGR,KerrvsOccam,Carullo:2019flw,Isi2019_nohair,Giesler2019_overtones,Ghosh2021,Ma_method,Ma_PRD} and the observation of new strong-field phenomena \cite{CalderonBustillo:2018zuq,CalderonBustillo:2019wwe,Varma_strong_kick,Measured_kick_magnitude_GW190814,GW190412_kick} to name a few. The retrieval of this information relies on an accurate extraction of the parameters of the GW source. This is commonly carried out through the comparison of the incoming strain detector data $d(t)$ to pre-computed waveform templates for the gravitational-wave strain $h(t; \theta)$ \cite{MatchedFilter,Allen:2005fk,Lalinference,Ashton:2018jfp,GW150914_properties} that span a continuous range of possible source parameters $\theta$ such as the masses and spins of merging compact objects. In this process, it is crucial that waveform templates are faithful representations of the incoming GWs. For the case of the GWs emitted during the early inspiral, these templates can be obtained through analytical approximated techniques such as post-Newtonian approximations \cite{Blanchet2014} or effective-one-body formalisms \cite{Damour:2000zb,Buonanno:1998gg}. However, modelling the full space-time dynamics taking place during the merger and ringdown stages of compact binary mergers requires solving the full Einstein's equations for the system, which can only be done through numerical simulations using numerical relativity (NR) \cite{Pretorius:2005gq,Bruegmann:2006at,Campanelli:2005dd,Baker:2006,Boyle:2007ft,Jani:2016wkt,RITCatalog,SXSCatalog,NRAR,Lehner2014_Review,CoRE,Gonzalez2023}. Consequently, approximated models are commonly calibrated to these simulations during the merger and ringdown stages \cite{Khan:2015jqa,Husa:2015iqa,XPHM_Pratten,XPHM_Cecilio,PhenomPv3HM,SEOBRNv4PHM,Gamba2022_TEOBResumS}. Alternatively, ``surrogate'' models can also be constructed by interpolating through a given set of NR simulations \cite{Blackman:2017pcm,NRSur7dq4,Freitas:2022xvg}. 

\begin{figure}[ht]
    \includegraphics[width=0.48\textwidth]{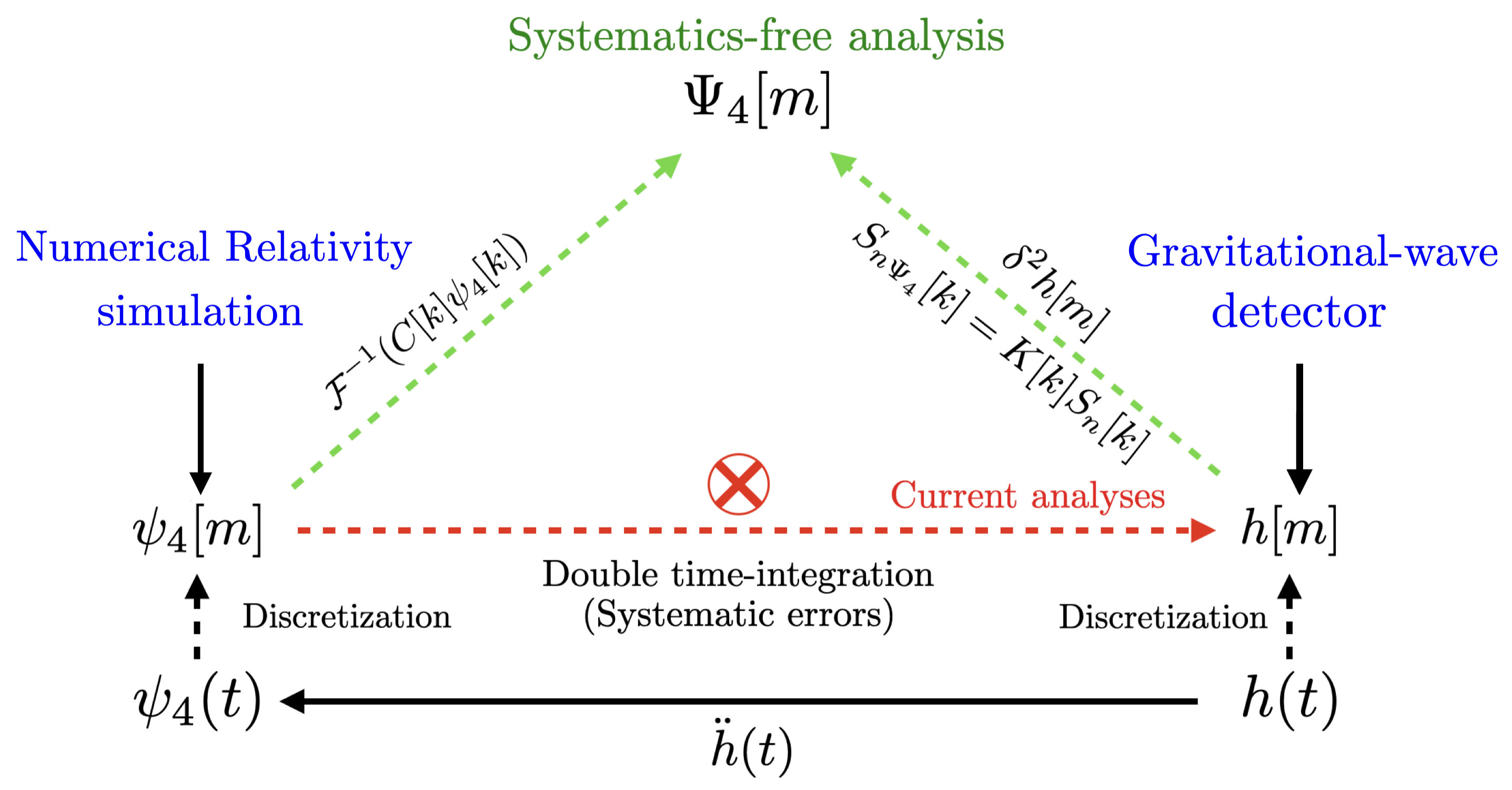}
    \caption{
    \textbf{Schematic comparison of our proposed data analysis framework and the currently used one.} To date, the $\psi_4$ magnitude outputted by numerical relativity simulations is converted to the strain $h$ outputted by gravitational-wave detectors via a double integration that is subject to systematic errors (red path). Instead, we transform both the simulation $\psi_4$ and the detector $h$ (and power-spectral-density $S_n$) into a third quantity that we label by $\Psi_4$, avoiding the integration process and the corresponding systematic errors (green paths). 
    }
    \label{fig:diagram}
\end{figure}

When available, NR provides the most accurate prediction for the GW emission of a given source. Therefore, if simulations are available in the parameter space of interest, a direct comparison of GW data to NR simulations is fundamental to -- at least -- check the robustness of the results provided by approximated models \cite{GW150914_NR,GW190521I}. Furthermore, in some cases such as highly eccentric or precessing sources, continuous semi-analytical models may not exist, leaving NR as the only option to analyse the data. Consequently, several studies have directly compared some of the existing signals to NR templates \cite{GW150914_NR,GW190521I,Lange:2017wki,Gayathri2022_ecc_NatAstro} and even used the latter as simulated signals to evaluate the efficacy of parameter estimation and detection algorithms~\cite{Aylott2009,ninja2,Chandra2020,Bustillo2021_ecc,Williamson2017}. 

Continuous models that include the entire inspiral-merger-ringdown process can only be built for regions of the parameter space densely covered by the available numerical simulations, namely sources with small orbital eccentricity and relatively equal masses. While these examples have sufficed to explain all GW signals detected to date, we are entering an era in which comparison with more exotic scenarios, for which only NR waveforms exist, is in order. Moreover, as of now, NR provides the only way to accurately model the dynamics of exotic compact objects and search for new physics beyond the neutron-star and Kerr black hole paradigm. 
For example, in Ref.~\cite{Bustillo:2021proca1} we recently compared GW190521 to numerical simulations of head-on mergers of exotic horizon-less objects known as Proca stars, demonstrating that the latter scenario is slightly more consistent with the data than the standard one based on BBH mergers.

\section{Extraction of gravitational-wave strain from numerical simulations via the Newman-Penrose scalar $\psi_4$}

While current GW detectors output a quantity known as GW strain $h(t)$, \ncor{the most extended type of NR simulations, which are based on the 3+1 formulation (or Cauchy evolution), return the so-called Newman-Penrose (NP) scalar $\psi_4(t)$} \cite{Newman:1961qr,Campanelli1999}. Under the Bondi gauge, the scalar $\psi_4(t)$ is related to the strain by $\psi_4(t)=\dv*[2]{h(t)}{t}$ \cite{Bondi}. Obtaining $h(t)$ therefore requires a double time integration, which is a non-trivial process involving fundamental difficulties. A well-known effect is the appearance of non-linear drifts in the resulting strain-waveform arising from the \textit{time-domain integration} of finite length, discretely sampled and noisy data streams. These are independent of the parameters of the simulation, such as gauge or numerical method used \cite{Pollney_Reissweig}.\\ 

\textit{Frequency-domain integration} methods can avoid the effects arising from time-domain integration but at the cost of modifying the original data. One of the best-known effects is the impact of spurious low-frequency modes in the strain waveform. It is known that the effect of these modes, resulting from either spectral leakage or aliasing effects, can be significantly suppressed through the usage of high-pass signal filters \cite{Pollney_Reissweig,HighPass_Santamara2010} that can reduce the energy within frequencies lower than a chosen cutoff $\omega_0$. This technique is commonly known as ``fixed-frequency integration'' (FFI).
A common and well-motivated choice for $\omega_0$ is that corresponding to the lowest instantaneous frequency of the GW emission. This is, for instance, the strategy used by the existing parameter-estimation code RIFT \cite{Lange:2017wki} to directly compare GW data to NR templates.
In practice, however, the choice of $\omega_0$ requires a certain amount of tuning. On the one hand, a small value will amplify nonphysical low-frequency components during the integration process. On the other hand, a large value may suppress the physical frequencies of the waveform. In some cases, such a choice can be clearly guided by the known features of the ``true'' signal.
For instance, in the case of quasi-circular binaries, the inspiral GW `chirp' frequency is a monotonically increasing function of time, which provides a natural way to associate a given choice of $\omega_0$ with a given starting time for the strain waveform.

This choice, however, is not obvious or even well defined for cases where the GW frequency is not a monotonic function of time. On the one hand, this makes FFI itself a potential source of error that, as we will show, can qualitatively impact the interpretation of the source. On the other hand, in the best-case scenario, a ``correct'' obtention of the strain data requires an artisan and time-consuming trial-and-error process adapted to each particular type of source. This is the case of some of the most interesting sources that the astrophysical community is trying to detect for the first time in the next observing run of LIGO and Virgo. Such cases include eccentric mergers \cite{Gayathri2022_ecc_NatAstro}, highly precessing mergers, dynamical captures, \cite{GW190521_Gamba} or even cases in which an inspiral stage does not exist at all. This is also the situation for the (academic) case of head-on collisions \cite{sanchis2019head} that we will address in this work or for core-collapse supernovae waveforms, for which the bounce GW signal consists essentially of a burst~\cite{Dimmelmeier:2002,Cerda:2013,Richers:2017}. Finally, the act of integrating involves a choice of integration constants that can cause fundamental changes in the properties of the waveform. For instance, it is typically imposed that the average of the GW strain should be zero, which automatically removes/changes the effect of 
GW memory \cite{Lasky2016}.\\

\ncor{\textit{Integration-free extraction methods.} While in this work we focus on numerical waveforms obtained through the NP formalism, whose use is much generalised in the numerical relativity community, we note that there exist alternative methods that can directly extract the GW strain. First, within the Cauchy evolution framework, the GW strain can be also directly extracted at finite radii through the so-called Regge-Wheeler-Zerilli (RWZ) method \cite{Regge1957,Zerilli1970,Moncrief1974}. Just as in the NP case, waveforms can then be extrapolated to null infinity through e.g.~polynomial expansions \cite{Boyle:2009vi,SXSCatalog}. RWZ extraction has been long employed by the SXS collaboration \cite{SXSCatalog}, producing extensive catalogs of BBH waveforms that include the largest number of inspiral cycles in the literature. These waveforms have been consistently used to inform  continuous waveform models (e.g.~\cite{NRSur7dq4}), often in combination with NP ones (e.g.~\cite{SEOBRNv4PHM,PhenomPv3HM}), and to directly analyse several GW events \cite{GW150914_NR,GW190521I}. Second, within the so-called Cauchy Characteristic Extraction (CCE) framework~\cite{Bishop1996}, waveforms can be directly extracted at future null infinity. While early simulations were limited to output the Bondi news 
function~\cite{Winicour} (given by ${\cal{N}}(t)=\dv*{h(t)}{t}$), later developments made possible the direct extraction of the GW strain \cite{Bishop2013_strain,Pollney2011_Llama,2110.08635_SXS_CCE}, including some simulations in the SXS catalog \cite{2110.08635_SXS_CCE}}.\\

\ncor{While the above approaches present advantages with respect to the NP formalism, there are reasons that motivate the generalised used of the latter. First, the master wave equations of the RWZ approach are obtained under the assumption that the background metric can be described by a Schwarzschild spacetime where perturbations are applied. Also, it has been found that the relative accuracy of RWZ and NP methods depend on the the case of application, e.g., \cite{Reisswig2011_RWZ_test,DePietri2016}. Second, while CCE can deliver extremely accurate waveforms, it involves specific complications that have so far prevented its widespread use, on top of its intensive computational cost. For instance, recent studies have discussed the weak hyperbolicity of the characteristic evolution equations~\cite{giannakopoulos2020hyperbolicity,giannakopoulos2022gauge,giannakopoulos2023numerical}. Also, initial spurious emission known as junk radiation has been found to last significantly longer in these 
simulations~\cite{2110.08635_SXS_CCE}. 
In addition, characteristic evolution (based on null foliations) is known to be limited to describe BBH systems, as null surfaces may focus and form caustics~\cite{Winicour}.
This triggered the design of so-called Cauchy Characteristic Matching methods \cite{CCM_Bishop,giannakopoulos2023numerical}. Finally, we note that NP presents several advantages of its own, as enumerated in \cite{Campanelli1999}: a) it provides a first-order, gauge-invariant description of the radiation field \textcolor{black}{(please see Koop \& Finn \cite{KoopFinn} for a fully gauge invariant derivation of the detector response)}; b) it does not rely on any frequency or multipole decomposition; c) the Weyl scalars ($\psi_4$ among them) are defined in the full non-linear theory. A one-parameter perturbative expansion of this theory was proved
to provide a reliable account of the problem \cite{Damour1994}; and d), finally, the NP formulation provides a simpler framework to organise higher-order perturbation schemes. For an overview of different waveform extraction methods we refer the reader to \cite{Bishop2016}}\\

In this work we remove from GW data analysis the fundamental problems related to waveform integration by avoiding such step. We present a framework, schematically summarised in Fig.~\ref{fig:diagram}, to perform GW data analysis directly using $\psi_4(t)$. This provides an uniquely defined way to obtain GW waveforms for data analysis -- free of human choices -- which are by definition free of the systematic errors related to waveform integration. We showcase our framework in the context of parameter inference and model selection performed on both synthetic signals injected in LIGO-Virgo noise and on real GW signals.\\

\section{Our case of study:\\ Mergers of Proca-stars}

\subsection{Proca stars and Dark Matter}

Proca stars belong to a family of theoretical exotic compact objects (ECOs) known as bosonic stars~\cite{Schunck:2003kk,brito:2016proca,herdeiro:2017vs,herdeiro:2019flat}. 
These are part of the wider family of objects known as ``BH mimickers'' which, lacking the characteristic event horizon of BHs, can reproduce many of their properties - see $e.g.$~\cite{Cardoso:2019rvt,Vincent:2015xta,Herdeiro:2021lwl}, avoiding issues related to the black hole singularity, as well as poorly understood issues related to quantum fields near event horizons~\cite{Almheiri:2012rt}.
ECOs have been proposed, $e.g.$, as dark matter candidates \cite{EuCAPT_astroparticle}, in particular in models invoking the existence of hypothetical ultralight ($i.e.$~sub-eV) bosonic particles, often referred to as fuzzy dark matter~\cite{Hui:2016ltb}. One common candidate is the pseudo-scalar QCD axion~\cite{Peccei:1977hh}, but other ultralight bosons arise, e.g., in the string axiverse \cite{Arvanitaki2010}. In particular, vector bosons are also motivated in extensions of the Standard Model of elementary particle \cite{Freitas:2021cfi} and can clump together forming macroscopic entities dubbed vector boson stars or Proca stars. 

Bosonic stars are amongst the simplest and dynamically more robust ECOs proposed so far and their dynamics have been extensively studied,  $e.g.$ \cite{liebling2017dynamical,bezares2017final,palenzuela2017gravitational,sanchis2019head}. Scalar boson stars and their vector analogues, Proca stars~\cite{brito:2016proca,sanchis2017numerical}, are self-gravitating stationary solutions of the Einstein-(complex, massive) Klein-Gordon~\cite{Schunck:2003kk} and of the Einstein-(complex) Proca~\cite{brito:2016proca} systems, respectively. These consist of complex bosonic fields oscillating at a well-defined frequency $\omega$, which determines
the mass and compactness of the star. 
Bosonic stars can dynamically form without any fine-tuned condition through the gravitational cooling mechanism \cite{Seidel1994,DiGiovanni2018}.

While spinning solutions have been obtained for both scalar and vector bosons, the former are unstable against non-axisymmetric perturbations, in the simplest models wherein the bosonic field is free~\cite{SanchisGual2019,di2020dynamical}. Hence, we will focus on the vector case in this work. 
For non-self-interacting bosonic fields, the maximum possible mass of the corresponding stars is determined  by the boson particle mass $\mu_{\rm B}$.  In particular, ultralight bosons within $10^{-13}\leq\mu_{\rm B}\leq10^{-10}$~eV, can form stars with maximal masses ranging between ${\sim}1000$ and 1 solar masses, respectively. In \cite{Bustillo:2021proca1}, we showed that GW190521 was consistent with the head-on collision of two Proca stars with $\mu_{\rm B}=8.7\times 10^{-13}$ eV.\\

\subsection{Numerical simulations of Proca star mergers}

We will demonstrate our $\psi_4$ data analysis making use of NR simulations of head-on collisions of spinning Proca stars.
In addition to the quadrupole $(\ell,m)=(2,\pm 2)$ modes dominant for circular mergers, our simulations also yield the $(2,0)$ mode, co-dominant for the case of head-on collisions, and the much weaker $(3,\pm3)$ and $(3,\pm2)$ modes.
Our set of waveforms is obtained from simulations of the collisions of two spinning Proca stars with aligned spin axes~\cite{Bustillo:2021proca1, Sanchis:synchronize,SanchisGual2022_Phases}. 
Although starting from rest, the trajectories of the two stars are eccentric rather than strictly head-on due to frame-dragging.
In our study's region of parameter space, all Proca-star progenitors are sufficiently massive and compact to trigger the gravitational collapse of the remnant. Therefore, the outcome of the collision always leads to BH formation after the merger.
The simulations are performed with the \etk infrastructure~\cite{Loffler:2011ay, Loffler:2012et, Miguel:2013intro}, together with the \carpet package~\cite{Schnetter:2004mesh, Goodale:2003cactus} for mesh-refinement. The Proca evolution equations are solved via a modified Proca thorn~\cite{zilhao2015nonlinear,SanchisGual2019, sanchis2019head, Helvi:2020canuda} to include a complex field. We consider both equal-mass and unequal-mass cases, as reported in our numerical Proca catalogue \cite{SanchisGual2022_Phases}. The initial data consists of
the superposition of two equilibrium solutions separated by $D = 40/\mu$~\cite{palenzuela2007head,palenzuela2017gravitational,sanchis2019head,Bustillo:2021proca1}, in geometrized units, guaranteeing an admissible initial constraint violation. The equilibrium stars are numerically constructed using the solver {\tt fidisol/cadsol} for non-linear partial differential equations of elliptic type, via a Newton-Raphson method (see~\cite{brito:2016proca, herdeiro:2017vs, herdeiro:2019flat} for more details). 

\section{Data analysis with $\psi_4$}
\label{sec:psi4}
Consider an observation model
\begin{equation}
    \label{eq:obs_model}
    \begin{split}
        d(t)
    &= F_{+}h_{+}(t) + F_{\times} h_{\times} (t) + n(t) \\
    &= s(t) + n(t)\,,
    \end{split}
\end{equation}
where $s(t) = F_{+}h_{+}(t) + F_{\times} h_{\times} (t)$ is the GW strain, $F_{+}$ and $F_{\times}$ are the beam pattern functions of the $+$ and $\times$ polarization states i.e.\ $h_{+}(t)$ and $h_{\times}(t)$ respectively, and $n(t)$ is the detector noise. Here we only consider a transient signal, therefore the beam pattern functions are approximated to be constant over the duration of the signal for a given sky localisation and polarisation angle. We can rewrite Eq.~\eqref{eq:obs_model} as follows
\begin{equation}
    \begin{split}
        d(t) &= \Re\left[(F_{+} + iF_{\times})(h_{+}(t) - ih_{\times}(t))\right] + n(t) \\
             &= \Re\left[(F_{+} + iF_{\times})h(t)\right] + n(t)\,,
    \end{split}
\end{equation}
where $h(t) = h_{+}(t) - ih_{\times}(t)$. Taking second-time derivative on both sides yields
\begin{equation}
    \begin{split}
        \diff[2]{d(t)}{t}
    &=
    \Re\left[(F_{+} + iF_{\times})
    \psi_{4}(t)\right] + 
    \diff[2]{n(t)}{t} \\
    &= s_{\psi_{4}}(t) + \diff[2]{n(t)}{t} \,, \\
    \end{split}
\end{equation}
where $\psi_{4}(t) = \dv*[2]{h(t)}{t}$ and $s_{\psi_{4}}(t) = \dv*[2]{s(t)}{t}$. Now we have obtained the observation model with $\psi_{4}(t)$ directly involved. Since in practice we analyse the digital strain data which are discrete, we have to replace the second-order differential operator $\dv*[2]{t}$ by the second-order difference operator $\delta^{2}$ defined by

\begin{equation}
    (\delta^{2}x)[m] \coloneqq \frac{x[m + 1] - 2x[m] + x[m - 1]}{(\Delta t)^{2}} \,,
    \label{eq:second_order_finite_difference}
\end{equation}

where $x[m]$ is a discrete time series (labelled by index $m$) with a sampling interval $\Delta t$. We then have

\begin{equation}
    (\delta^{2}d)[m] =
    \Re\left[(F_{+} + iF_{\times}) (\delta^{2}h)[m]\right] +
    (\delta^{2}n)[m] \,.
    \label{eq:obs_model_discrete2}
\end{equation}

To express the above observation model with a closer notation connection to $\psi_{4}(t)$, i.e., the second derivative of $h(t)$, we put a subscript $\Psi_{4}$ to represent a second-order differenced time series i.e.\ $x_{\Psi_{4}}[m] := (\delta^{2}x)[m]$. And we also reserve $\Psi_{4}[m]$ as a special notation for $(\delta^{2}h)[m]$, in analogy with $\psi_{4}(t)$. With the new set of notations, we rewrite Eq.~\eqref{eq:obs_model_discrete2} as
\begin{equation}
    \label{eq:obs_model_discrete2_psi4}
    \begin{split}    
        d_{\Psi_{4}}[m] 
    &= \Re \left[
        (F_{+} + iF_{\times})\Psi_{4}[m]
    \right]
    +
    n_{\Psi_{4}}[m] \\
    &= s_{\Psi_{4}}[m] + n_{\Psi_{4}}[m]
    \end{split}
\end{equation}

where $s_{\Psi_{4}}[m] =\Re \left[(F_{+} + iF_{\times})\Psi_{4}[m] \right]$ is the second difference of the GW strain signal. In practice, parameter estimation is often performed on the data in the Fourier domain due to the more efficient evaluation of the likelihood function as compared to that in the time domain (see e.g.\ Ref.~\cite{Carullo:2019flw}). Applying Fourier transform on Eq.~\eqref{eq:obs_model_discrete2_psi4} yields
\begin{equation}
    \widetilde{d}_{\Psi_{4}}[k] =
    \frac
    {
            (F_{+} + iF_{\times}) \widetilde{\Psi}_{4}[k] + 
            (F_{+} - iF_{\times}) \widetilde{\Psi}^{*}_{4}[-k]
    }
    {2}
    +
    \widetilde{n}_{\Psi_{4}}[k].
\end{equation}

We note, however, that since the $\psi_4(t)$ extracted from NR simulations is sampled from the second derivative of $h(t)$, $\ddot{h}[m]$, \textit{but not} the second-order finite difference of the discrete strain $(\delta^2 h)[m]$; we cannot directly use $\psi_4$ as templates for our analysis. Instead, as represented on the left side of Fig.~\ref{fig:diagram}, we need to transform these following the relation
\begin{equation}
    \widetilde{\Psi}_{4}[k] = \frac{1-\cos(2\pi k\Delta f\Delta t)}{2\pi^{2}(k\Delta f\Delta t)^{2}}\,\widetilde{\psi}_{4}(k\Delta f)\,,
    \label{eq:psiPsi}
\end{equation}
for which we provide the proof in Appendix~\ref{app:relation}.  In the above equation, $\Delta f = 1 / (M\Delta t)$ and $M$ is the length of the discrete $\Psi_{4}[m]$. We finally obtain the observation model in the Fourier domain with $\widetilde{\psi}_{4}$ directly involved as follows
\begin{equation}
    \widetilde{d}_{\Psi_{4}}[k] = 
    \widetilde{s}_{\Psi_{4}}[k; \theta] +
    \widetilde{n}_{\Psi_{4}}[k]
\end{equation}
where
\begin{widetext}
\begin{equation}
    \widetilde{s}_{\Psi_{4}}[k; \theta = \{\alpha, \delta, \psi, t_{\text{event}}, \theta'\}] =
    \frac{1-\cos(2\pi k \Delta f \Delta t)}{4\pi^{2}(k\Delta f \Delta t)^{2}}
    \left[
        (F_{+} + iF_{\times}) \widetilde{\psi}_{4}(k\Delta f; \theta') + 
        (F_{+} - iF_{\times}) \widetilde{\psi}^{*}_{4}(-k\Delta f; \theta')
    \right]
\end{equation}
\end{widetext}
where $F_+$ and $F_{\times}$ are functions of the sky location of the source, i.e.\ the right ascension $\alpha$, the declination $\delta$, the polarisation angle $\psi$, and the event time $t_{\text{event}}$, and $\theta'$ are other source parameters.

Another crucial ingredient for parameter estimation is the distribution of the second-differenced noise $\widetilde{n}_{\Psi_{4}}[k]$ in order to obtain the likelihood function. It can be shown (see Appendix~\ref{app:distr_delta_2}) that if $n(t)$ follows the stationary Gaussian distribution with power spectral density $S_{n}(f)$, then $n_{\Psi_{4}}[m]$ also follows the stationary Gaussian distribution with power spectral density $S_{n_{\Psi_{4}}}[k]$ as
\begin{equation}
    \label{eq:psi4_psd}
    S_{n_{\Psi_{4}}}[k] = \frac{1}{(\Delta t)^{4}}
    \left(
        6 -
        8 \cos(\frac{2\pi k}{M}) +
        2 \cos(\frac{4\pi k}{M})
    \right)
    S_{n}[k]
\end{equation}
where $S[k]$ is understood to be $S(k \Delta f)$. 

The likelihood function for source parameters $\theta$ given the second-differenced strain data $d_{\Psi_{4}}$ in the Fourier domain is therefore
\begin{equation}
    \label{eq:logl}
     \mathcal{L}(\ncor{d_{\Psi_{4}} \mid \theta})
    \propto
    \exp \bigg{[} -\frac{1}{2}
    (
    d_{\Psi_{4}} - s_{\Psi_{4}}(\theta)|
    d_{\Psi_{4}} - s_{\Psi_{4}}(\theta)
    ) \bigg{]}\,,
\end{equation}
where $s_{\Psi_{4}}(\theta)$ is the second-differenced template with parameters $\theta$, and $(a | b)$ denotes the noise-weighed inner product defined as \cite{Cutler:1994ys}
\begin{equation}
    \label{eq:inner}
    (a | b) = 4 \Re \int_{f_\text{min}}^{f_{\text{max}}}
    \frac
    {\tilde{a}^{*}(f)\tilde{b}(f)}
    {S_{n_{\Psi_{4}}}(f)}
    \,\dd f\,,
\end{equation}
with $S_{n_{\Psi_{4}}}(f)$ the power spectral density of the second-differenced detector noise given in Eq.~\eqref{eq:psi4_psd}.

\subsection{Summarised recipe for a $\psi_4$-analysis}
We summarise here our method to perform GW data analysis with $\psi_4$. 
Consider the canonical situation where we have detector strain data $d(t)$, the corresponding PSD $S_n(f)$ and strain templates $s(t;\theta)$ for source parameters $\theta$. Then, an analysis based on the Newman-Penrose scalar can be implemented by just replacing:
\begin{equation}
    \begin{array}{rcl}
        \text{Data: } & d(t)  \rightarrow d_{\Psi_4}(t) \equiv (\delta^{2}d)(t) &\text{(Eq.~\eqref{eq:second_order_finite_difference})} \\
        \text{PSD: } & S_n(f) \rightarrow S_{n_{\Psi_4}}(f) 
                     &\text{(Eq.~\eqref{eq:psi4_psd})} \\
        \text{Templates: } & \tilde{s}(f;\theta) \rightarrow {\widetilde{s}_{\Psi_4}(f;\theta)}
                           &\text{(Eq.~\eqref{eq:psiPsi})}.
    \end{array}
    \label{eq:transformation}
\end{equation}
Above, we assume that $\Psi_4$ templates are obtained from the $\psi_4$ outputted by NR simulations through Eq.~\eqref{eq:psiPsi}, i.e., following the right path of Fig.~\ref{fig:diagram}, and are therefore free of integration systematics present in the strain templates. Consequently, both sides of Eq.~\eqref{eq:transformation} generally lead to different results, which is the point of our work. Nevertheless, one can also check that obtaining $\Psi_4$ as $\delta^{2} h$ following the left side of \ncor{Fig.}~\ref{fig:diagram} (taking second-order finite differences on the strain templates) makes both sides of Eq.~\eqref{eq:transformation} return identical results. 


\section{Results: Data whitening}
GW data analysis ultimately relies on whitened data. This is, the detector data divided by the amplitude spectral density of the background noise $\tilde{d}(f)/\sqrt{S_n(f)}$. This is then matched-filtered \cite{MatchedFilter} with the whitened waveform templates $\tilde{h}(f)/\sqrt{S_n(f)}$. Here we show that the transformations we perform on both of the strain data and PSD to obtain their $\Psi_4(t)$ versions lead to identical whitened data and templates. Therefore, these lead to a completely equivalent analysis where the only difference is that $\Psi_4(t)$-templates are free of systematic errors introduced during the obtention of the $h(t)$ templates through integration.\\ 

\begin{figure*}
    \begin{center}
        \includegraphics[width=0.54\textwidth]{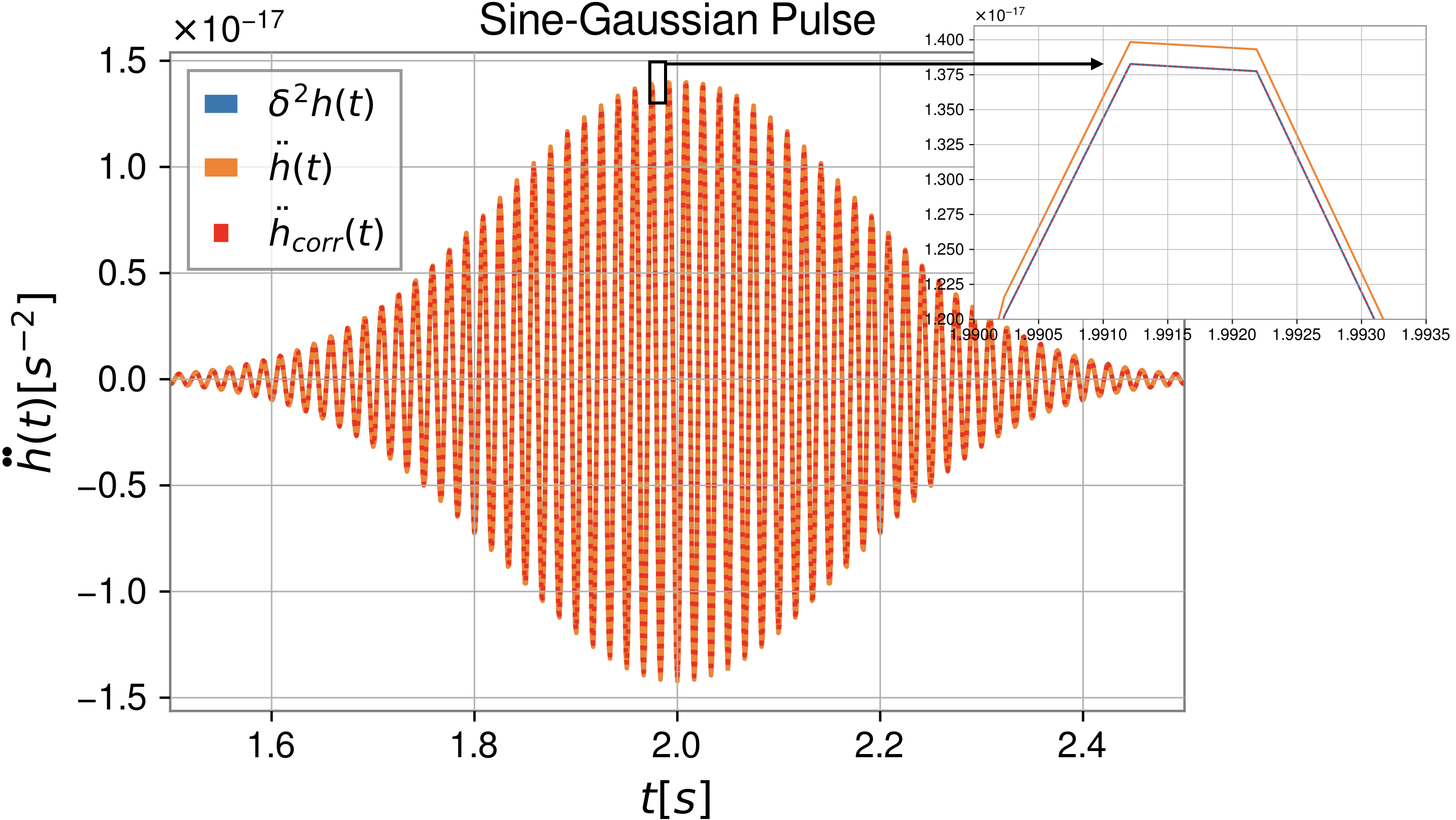}
        \includegraphics[width=0.45\textwidth]{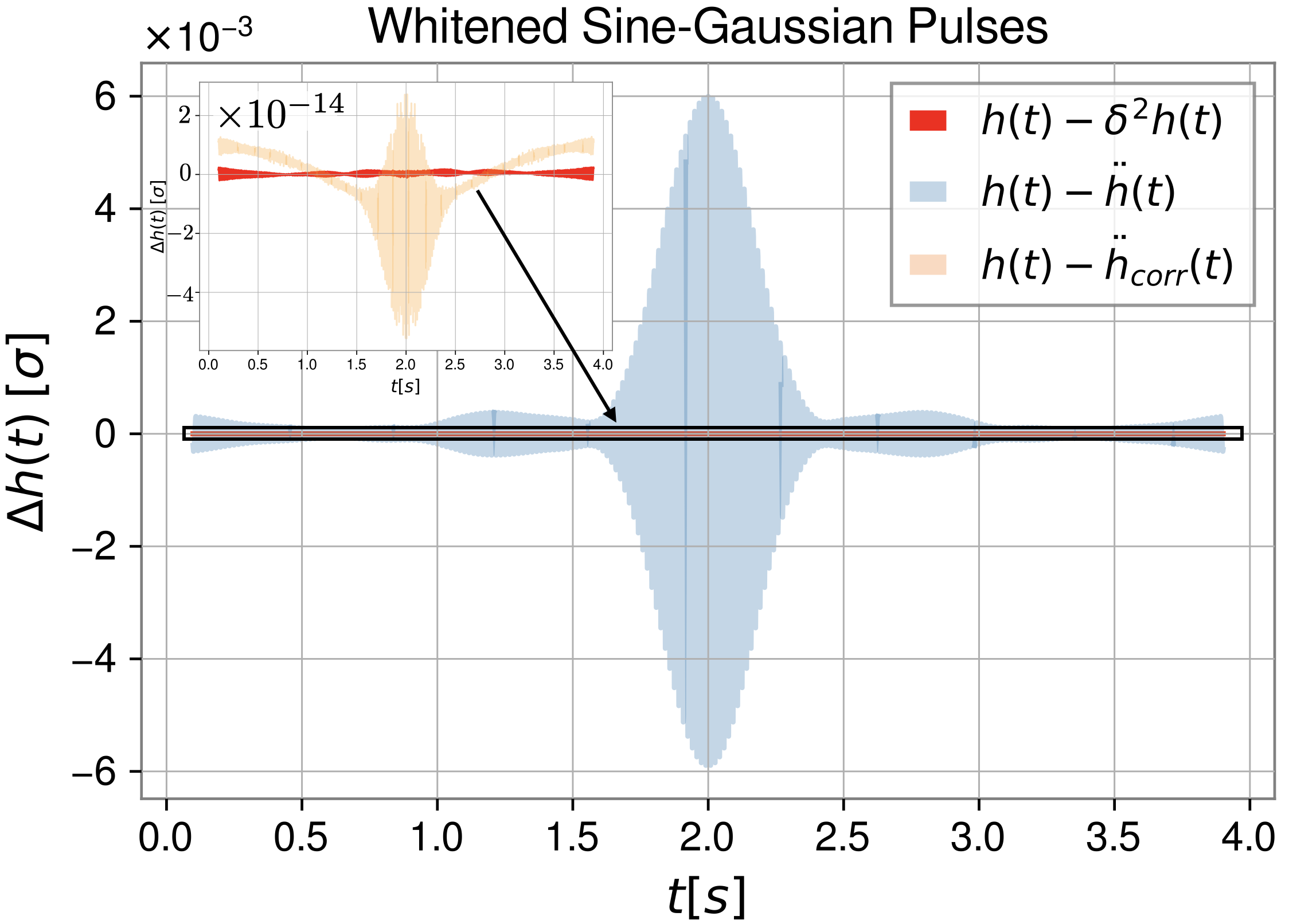}
        \caption{\textbf{Demonstration of our transformation and whitening scheme on sine-Gaussian pulses.} The left panel shows the analytical second derivative $\ddot{h}(t)$ of a sine-Gaussian strain time-series $h(t)$ and its second-order finite difference time-series $\delta^2 h(t)$. We obtain the latter both directly and from correcting $\ddot{h}(t)$ via Eq.~\eqref{eq:psiPsi}. The inset of the right panel shows the difference between the latter two time-series, whitened with a PSD $S_{n_{\Psi_4}}$, and the original strain $h(t)$ whitened by the corresponding $S_{n}$. These are of the order of 1 part in $10^{12}$. The main panel shows the (much larger) difference between the whitened strain and second derivative $\ddot{h}(t)$ whitened with $S_{n_{\Psi_4}}$.}
        \label{fig:whitening_sg}
    \end{center}
\end{figure*}

\begin{figure*}
    \begin{center}
        \includegraphics[width=0.49\textwidth]{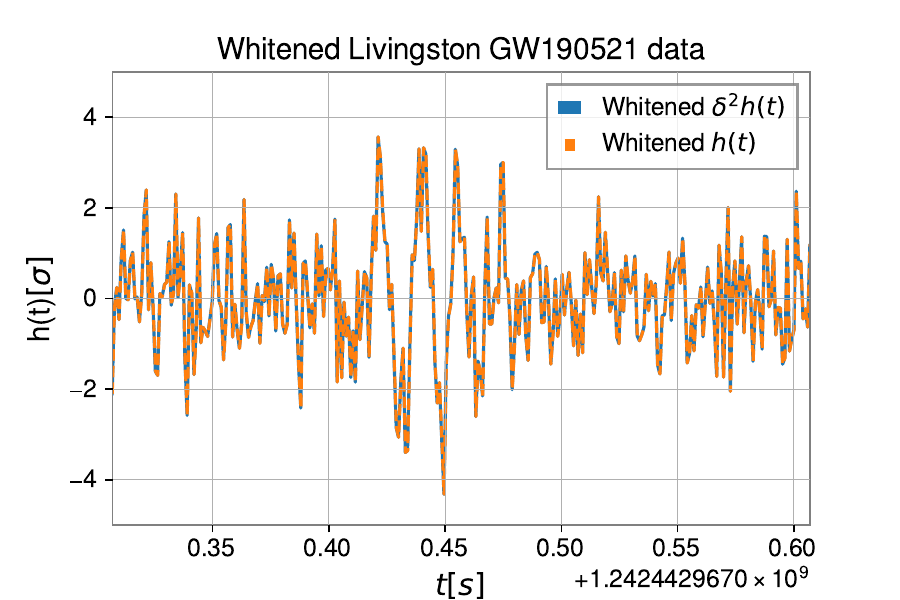}
        \includegraphics[width=0.49\textwidth]{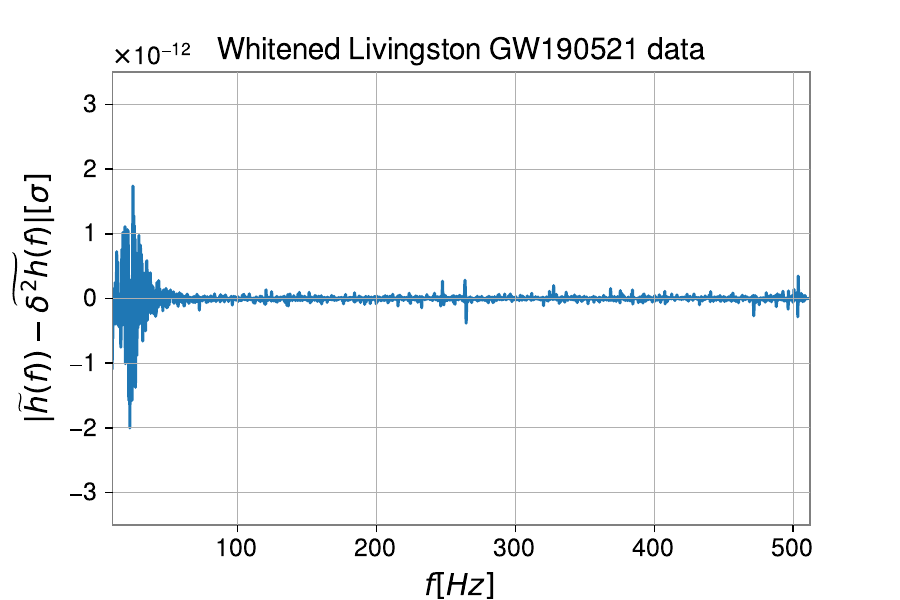}
        \caption{\textbf{Whitening of strain and $\Psi_4$ detector data.} Left: whitened strain and $\Psi_4$ gravitational-wave time-series from the Livingston detector around the time of GW190521. Right: difference between the absolute values of the corresponding Fourier-domain data. These are at the level of one part in $10^{12}$, so that both whitened detector data are equivalent for all practical purposes.}
        \label{fig:whitening}
    \end{center}
\end{figure*}

\subsection{Analytic case: Sine-Gaussian waveform}

We start by considering the case of an analytical function $h(t)$ for which we can analytically compute $\psi_4(t)=d^2h(t)/dt^2 \equiv \ddot{h}(t)$. This way, we have a ``controlled'' experiment where the ``strain'' $h(t)$ and the corresponding ``Newman-Penrose scalar'' $\psi_4(t)$ at hand are free of potential differences introduced by systematic errors arising from the double integration of the former. In particular, we consider the case of a sine-Gaussian strain time series
\begin{equation}
    h(t)=A_0\,e^{-(t-t_0)^2/\tau}\cos(\omega t + \phi_0).
\end{equation}
For this strain, we compute the corresponding $\psi_4(t)$ and obtain a corresponding finite sampling time-series $\psi_4[m]$. Next, we obtain a finite sampling-time series of the strain $h[m]$ and compute the corresponding compute second-order difference $\Psi_4[m]\equiv (\delta^{2}h)[m]$. Finally, we also obtain $\delta^{2}h[m]$ by ``correcting'' the discretized second derivative $\ddot{h}[m]$, via the transformation in Eq.~\eqref{eq:psiPsi}. 
The left panel of Fig.~\ref{fig:whitening_sg} shows these three time series. The inset therein shows that while the two finite-differenced time series are identical, $\ddot{h}(t)$ differs from them. \\

Next, we compare the result of whitening the strain time-series $h[m]$ by a given strain PSD $S_n$ and that of whitening $\delta^{2}h[m]$ by the corresponding transformed PSD $S_{n_{\Psi_4}}$. The inset of the right panel of Fig.~\ref{fig:whitening_sg} shows the difference between the whitened Fourier transforms of $h[m]$ and $\delta^{2}h[m]$. As before,  we obtain the latter both by taking second-order finite differences on $h[m]$ and by correcting $\ddot{h}[m]$ via Eq.~\eqref{eq:psiPsi}, which we denote $\ddot{h}_\text{corr}$. These differences are below 1 part in $10^{12}$. The main panel shows the differences between the whitened $h[m]$ and $\ddot{h}[m]$, ``wrongly'' whitened by $S_{n_{\Psi_4}}$, which are nine orders of magnitude larger.\\

The above shows that given a continuous strain $h(t)$ and its corresponding $\psi_4(t)\equiv \ddot{h}(t)$, the processes of a) taking second-order finite differences on the finite-sampling time-series of $h[m]$ and b) correcting $\ddot{h}[m]$  via Eq.~\eqref{eq:psiPsi} lead to identical time series that we call $\Psi_4(t)$. Second, it shows that given the PSD of a stochastic Gaussian and stationary strain time-series $h[m]$ and the corresponding $(\delta^2)h[m] = \Psi_4[m]$, our estimation of the PSD $S_{n_{\Psi_4}}$ correctly whitens the latter. In the following, in order to adapt to common literature, we drop the discrete notation, e.g. replacing $h[m]$ by $h(t)$.

\subsection{Whitening of detector data}

The left panel of Fig.~\ref{fig:whitening} shows the whitened strain $d(t)$ and $d_{\Psi_4}(t)$ time-series of the Livingston detector at the time of the event GW190521. Their differences, shown in the right panel, are well below one part in $10^{12}$. Again, this shows that our formalism correctly whitens the data and that, therefore, both types of analyses are totally equivalent provided that no artefacts are picked during the construction of the strain templates from the $\psi_4$ ones. 

\subsubsection{Whitening waveform templates:\\ impact of the choice of $\omega_0$ during fixed-frequency integration}

The left panel of Figs.~\ref{fig:whitening_21gtemplate-1} and~\ref{fig:whitening_14ftemplate} show raw strain templates two simulations of a Proca star merger $h(t)$.
These respectively correspond to a waveform consistent with GW190521 and to a larger mass-ratio and rather edge-on configuration with multi-modal structure \cite{Bustillo:2016gid,CalderonBustillo:2019wwe,Graff:2015bba} that, in a separate paper \cite{psi4_observations}, we find consistent with the GW trigger 200114$\_$020818 (S200114f in the following) \cite{O3IMBH} \footnote{Please see the specific parameters in Appendix C}. In both, cases, the strain $h(t)$ has been obtained from $\psi_4(t)$ through an FFI using a given $\omega_0$ cutoff. Overlaid, we show the corresponding $\Psi_4(t)$ obtained both as $(\delta^{2}h)(t)$ and by correcting $\psi_4(t)$ outputted by NR, which in the following we simply call $\Psi_4(t)$. We scale $h(t)$ by a suitable amplitude factor so that both waveforms can be plotted together. The right panel shows the corresponding whitened waveforms.

First, we note that while (as expected) the raw $h(t)$ widely differs from the two $\Psi_4(t)$ waveforms, the whitened $h(t)$ and $(\delta^{2}h)(t)$ waveforms are identical but differ from the ``direct'' $\Psi_4$. This is due to the impact of the choice of $\omega_0$ used to obtain $h(t)$ from $\psi_4(t)$. As we will show later, for the case shown in Fig.~\ref{fig:whitening_21gtemplate-1}, these differences are not large enough to have a significant impact on parameter inference or model selection. However, for the case shown in Fig.~\ref{fig:whitening_14ftemplate}, the choice of $\omega_0$ removes enough ``true'' signal power to cause clear morphological alterations that impact both parameter estimation and model selection.

\begin{table*}
    \centering
    \begin{tabular}{cc|rlrlrl|rlrlrl}
        \multicolumn{2}{c|}{Waveform model} & \multicolumn{6}{c|}{GW190521-like injection} &  \multicolumn{6}{c}{S200114f-like injection}\\
        \hline
        \rule{0pt}{3ex}%
         & & \multicolumn{2}{c}{SNR = 15} & \multicolumn{2}{c}{SNR = 30} & \multicolumn{2}{c|}{SNR = 60} & \multicolumn{2}{c}{SNR = 15} &  \multicolumn{2}{c}{SNR = 30} & \multicolumn{2}{c}{SNR = 60} \\ 
        \rule{0pt}{3ex}%
        Injection $M$ & Template $M^*$ & $\log{\cal B}$ & $\log{\cal L}_{\rm max}$ &  $\log{\cal B}$ & $\log{\cal L}_{\rm max}$ &  $\log{\cal B}$ & $\log{\cal L}_{\rm max}$ & $\log{\cal B}$   & $\log{\cal L}_{\rm max}$  & $\log{\cal B}$   & $\log{\cal L}_{\rm max}$  & $\log{\cal B}$  & $\log{\cal L}_{\rm max}$ \\
        \rule{0pt}{3ex}%
        $\Psi_4$ & $\Psi_4$ & 94.1  & 123.2  & 477.2  & 514.3  & 2033.7  & 2063.8  & 90.0  & 124.2 & 475.9 & 517.8 & 2042.7 & 2074.4\\
        \rule{0pt}{3ex}%
        $\Psi_4$ & $\delta^{2}h_{\text{NF}}$ & 93.9  & 123.2  & 477.0  & 514.4  & 2033.4  & 2063.7  & 89.9  & 124.0 &  475.7 & 517.6 & 2041.7 & 2073.4 \\
        \rule{0pt}{3ex}%
        $\delta^{2}h_{\text{NF}}$ & $\delta^{2}h_{\text{NF}}$ & 93.8  & 123.1  & 476.5  & 513.7  & 2030.8  & 2061.3  & 89.2  & 123.9 & 475.0 & 516.8 & 2038.1 & 2070.1 \\
        \rule{0pt}{3ex}%
        $h_{\text{NF}}$ & $h_{\text{NF}}$ &   93.6  & 123.8 &  476.5 & 513.7 &  2031.0  & 2061.3 & \ncor{89.2}  & 123.1  & \ncor{474.5} & 516.8 & 2038.3 & 2070.1\\
        \rule{0pt}{3ex}%
        $\Psi_4$ & $\delta^{2}h_\text{F}$ & 93.6  & 122.9  &  475.7  & 512.7  &  2027.0  & 2057.3  & 83.9  & 117.8 & 451.0 & 492.3 & 1940.5 & 1971.9\\
        \rule{0pt}{3ex}%
        $\delta^{2}h_\text{F}$ & $\delta^{2}h_\text{F}$ & 92.4  & 121.5  & 470.7  & 507.3  & 2005.1  & 2035.5  & 64.1  & 98.2 & 372.7 & 414.1 & 1634.5 & 1665.8\\
        \rule{0pt}{3ex}%
        $h_\text{F}$ & $h_\text{F}$ & 92.2  & 121.5  & 470.1  & 507.3  & 2005.3  & 2035.5  &  64.1 & 98.3 & 372.7 &  414.1 & 1634.1 & 1665.7\\
        \rule{0pt}{3ex}%

    \end{tabular}
    \caption{\textbf{Summary of injection recovery with different waveform models} We report the log Bayes factor (for model $M^{*}$ vs. noise hypotheses) obtained from our different waveform models, together with the corresponding maximum log likelihood values. We show results for two types of injections \ncor{of Proca-star merger signals}, respectively consistent with the GW190521 signal and with the 200114f trigger, both with SNRs around 15. To show the  increasing impact of $\psi_4$ integration errors as the SNR raises, we further scale our injections by factors of 2 and 4, corresponding to SNRs of approximately 30 and 60. \ncor{Log Bayes' Factors have typical uncertainties of $\simeq 0.1.$ with maximum values of $0.5$.}}
    \label{tab:injections}
\end{table*}

\begin{figure*}
    \includegraphics[width=0.50\textwidth]{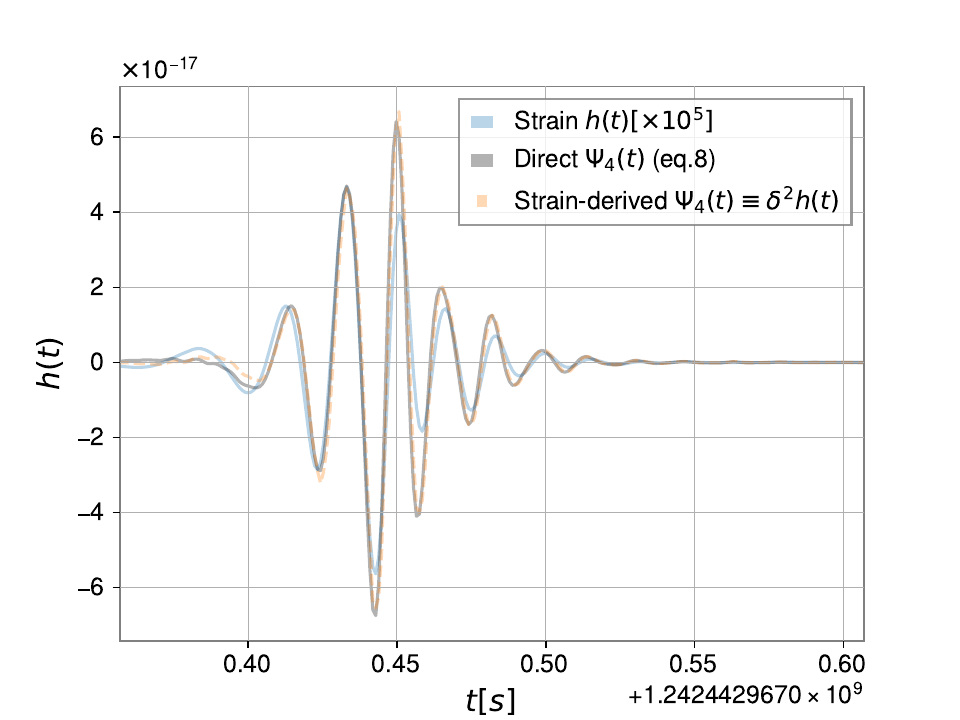}
    \includegraphics[width=0.47\textwidth]{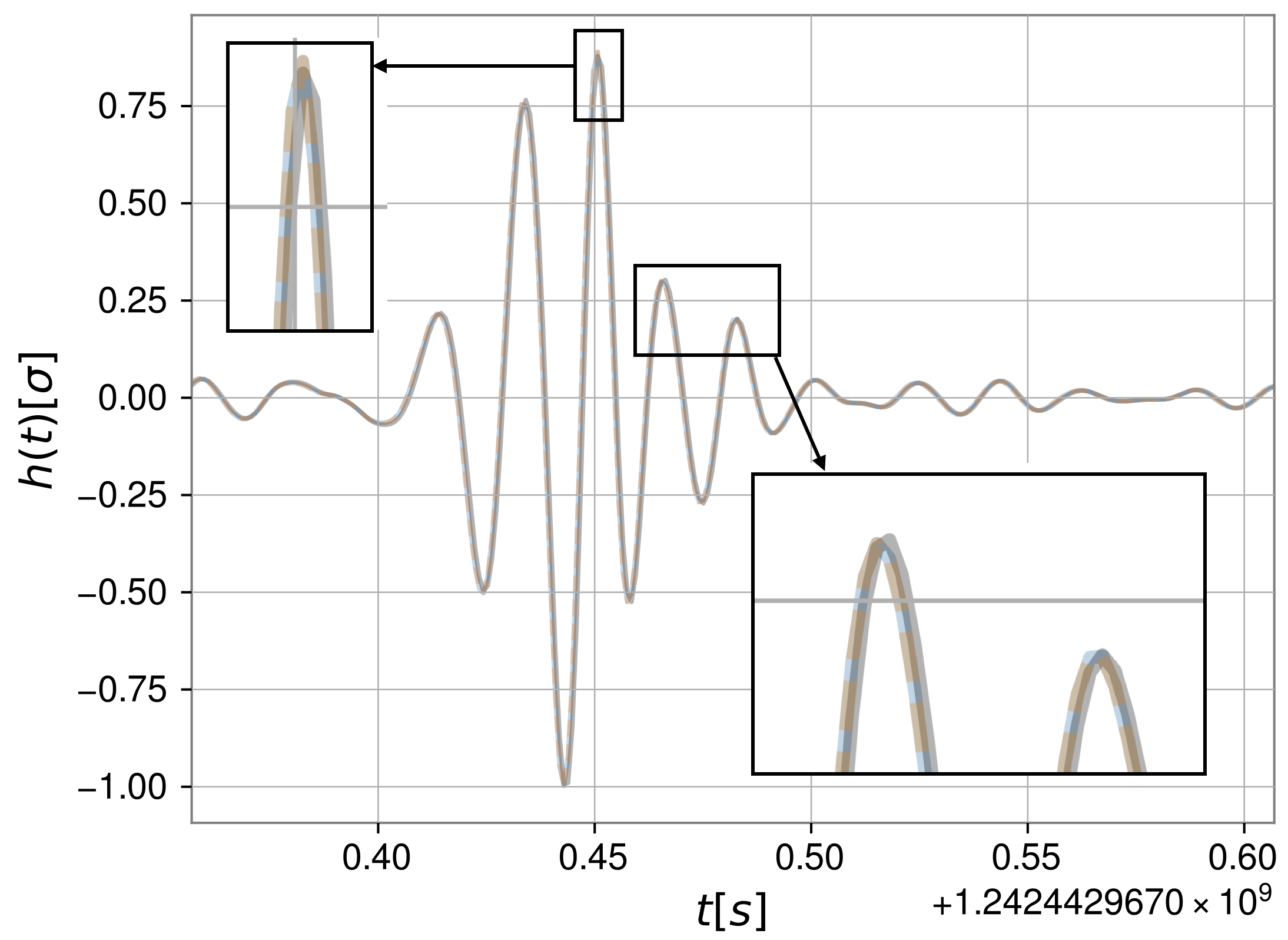}
    \caption{
    \textbf{Whitening of strain and $\Psi_4$ templates.} \textbf{Left:} We show the raw time-domain data for the case of a) $\Psi_4$ directly coming from a numerical relativity simulation (through Eq.~\eqref{eq:psiPsi}) of a head-on Proca-star merger consistent with GW190521 (black), b) the strain obtained from $\psi_4$ through double integration (blue) and c) the $\Psi_4$ obtained from the latter through second-order finite differencing, denoted by $\delta^2 h(t)$. The strain in the left panel has been conveniently scaled to note the obvious morphological differences with respect to $\Psi_4$. \textbf{Right:} corresponding whitened time-series. The zoomed boxes show how the $h(t)$ and $\delta^2 h(t)$ are exactly identical while very small differences can be observed with respect to the original $\Psi_4$.}
    \label{fig:whitening_21gtemplate-1}
\end{figure*}

\begin{figure*}
    \centering
    \includegraphics[width=0.50\textwidth]{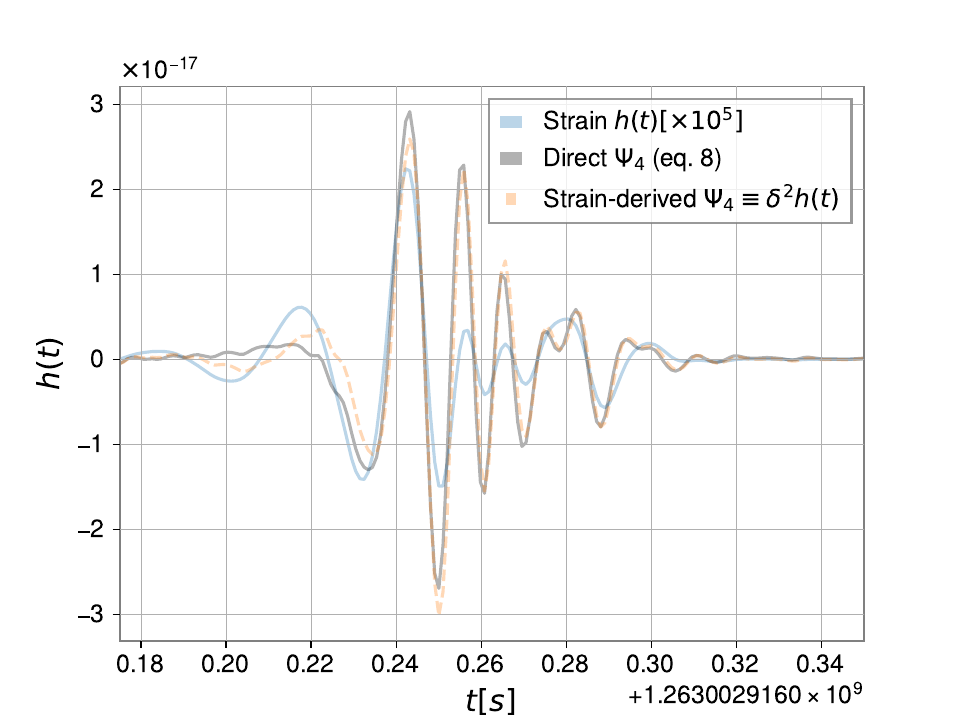}
    \includegraphics[width=0.47\textwidth]{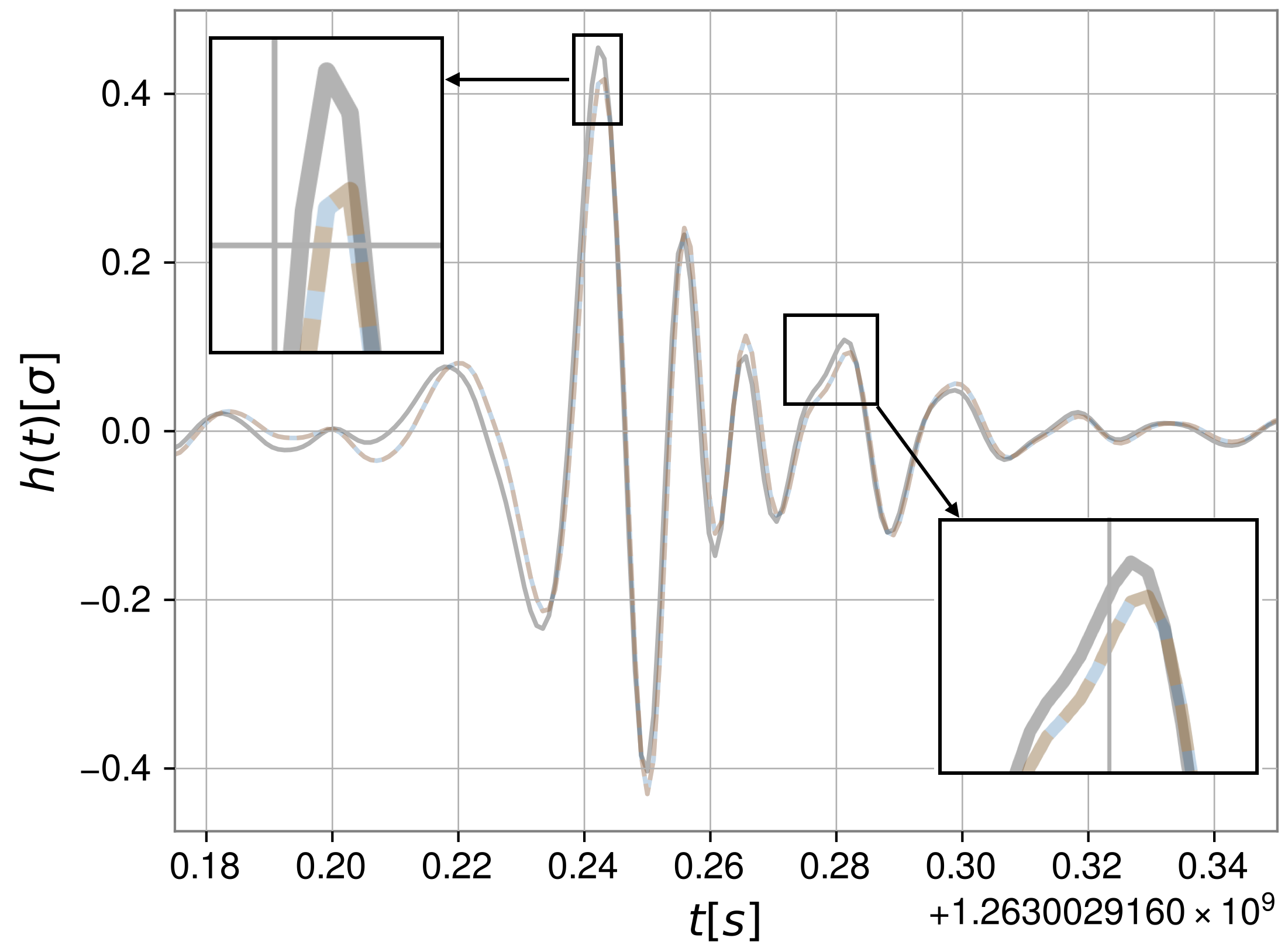}
    \caption{
    \textbf{Whitening of strain and $\Psi_4$ templates. Impact of aggressive choice of $\omega_0$.} Same as in Fig.~\ref{fig:whitening_21gtemplate-1} but for a waveform template consistent with S200114f \cite{O3IMBH,psi4_observations}. In this case, the differences between the $\Psi_4$ directly extracted from the numerical simulation and the other two waveforms are significantly more noticeable.}
    \label{fig:whitening_14ftemplate}
\end{figure*}

\section{Parameter inference and model selection on simulated signals}

\subsection{Summary of Bayesian Parameter Inference and Model selection}

We test our framework by performing full Bayesian parameter estimation and model selection on simulated signals injected in zero-noise using the Bayesian inference library \texttt{Parallel Bilby}. We consider a reference signal or ``injection'' $h_M(\theta_{\texttt{True}})$ with source parameters $\theta_{\texttt{True}}$ computed by a waveform model $M$. In our case, the model $M$ corresponds to either $\Psi_4(t)$, $h(t)$ or $(\delta^{2}h)(t)$. Next, we recover the posterior distributions of the parameters $p_{M^{*}}(\theta\,|\,h(\theta_{\texttt{True}}))$ using a different model $M^{*}$ as the signal template. This is given by
\begin{equation}
    p_{M^{*}}(\theta\,|\,h_{M}(\theta_{\texttt{True}})) = \frac{ {\cal{L}}_{M^{*}}(h_{M}(\theta_{\texttt{True}}) \,|\, \theta)\,\pi(\theta)}{{\cal{Z}}_{MM^{*}}}.
\end{equation}

Here, $\pi(\theta)$ denotes the prior probability of the parameters $\theta$ while ${\cal{L}}_{M^{*}}(h_{M}(\theta_{\texttt{True}}) \,|\, \theta)$ represents the likelihood of the data $h_{M}(\theta_{\texttt{True}})$ under the waveform model $M^{*}$ with the given parameters $\theta$. We use the canonical likelihood for GW transients in Eq.~\eqref{eq:logl}. Finally, the term ${\cal{Z}}_{MM^{*}}$ denotes the Bayesian evidence for the data $h_M$ assuming the template model $M^{*}$. This is equal to the integral of the numerator over the explored parameter space $\Theta$, given by  
\begin{equation}
    {\cal{Z}}_{MM^{*}}=\int_{\Theta} \pi(\theta)\,{\cal{L}}_{M^{*}}(h_{M}(\theta_{\texttt{True}}) \,|\, \theta) \,\dd\theta
\end{equation}
Given two template models $M_1$ and $M_2$ being compared to some data $d$, or some simulated signal $h_M(\theta)$, the relative probability for those models, or relative Bayes factor ${\cal{B}}^{M_1}_{M_2}$ is given by 
\begin{equation}
    {\cal{B}}^{M_1}_{M_2} = \frac{{\cal{Z}}_{M_1}}{{\cal{Z}}_{M_2}}
\end{equation}
Expressing these in terms of natural logarithms, it is commonly considered that the model $M_1$ is strongly preferred with respect to $M_2$ when $\log({\cal{B}}^{M_1}_{M_2}) = \log({\cal{Z}}_{M_1})-\log({\cal{Z}}_{M_1}) \geq 5$. Finally, since the Bayesian evidence $\cal{Z}$ represents the Bayes Factor for the ``model vs. noise'' hypotheses, we will commonly refer to it as simply the ``Bayes Factor'', denoting it as $\cal{B}$.\\

As it will become relevant later, it is important to note that, the evidence ${\cal{Z}}_{MM^{*}}$ is \ncor{bounded above} by the maximum value of the likelihood ${\cal{L}}_{M^{*}}(h_{M}(\theta_{\texttt{True}})|\theta_{\texttt{best}})$, achieved for the best fitting parameters $\theta_{\texttt{best}}$. This is, by the best fit that the model $M^{*}$ can provide for $h_M(\theta)$. At the same time, \ncor{in the absence of noise}, such maximum likelihood is capped by the ``optimal maximum likelihood'' ${\cal{L}}_{M}(h_{M}(\theta_{\texttt{True}})|\theta_{\texttt{True}})$.\\

To anticipate the expected consequences of respectively analysing and modelling a true GW using a template affected by integration errors, let us consider two scenarios. First, consider that we model the true GW, i.e. our injection, as $h_M(\theta_\texttt{True}) = \Psi_4(\theta_\texttt{True})$ and try to recover it  using templates $(\delta^2 h)(\theta_\texttt{True})$ which carry integration artefacts. Since such artefacts will change the frequency content of the templates, these will not perfectly match the injection, leading to a drop in the maximum likelihood and, therefore, of the corresponding Bayesian evidence in favour of the model. Second, any choice of the integration frequency cutoff $\omega_0$ will remove some true power from the waveform. This will consequently lead to an under-estimation of the signal loudness for a given source distance, yielding a bias toward lower distances. Second, for this last reason, if we model the true signal using either  $h(\theta_{\texttt{True}})$ or $(\delta^2 h)(\theta_\texttt{True})$, this will cause an under-estimation of the signal loudness, the optimal maximum likelihood and, therefore, an intrinsic decrease of the maximum Bayesian evidence achievable in the analysis.

\subsection{Specific set-up}

We perform parameter estimation on injections generated in terms of $\Psi_4(t)$, $h(t)$ and $(\delta^{2}h)(t)$. \textcolor{black}{For the latter two, we consider two cases. In the first case, we obtain the strain through FFI using a frequency cutoff M$\omega_0 \simeq 0.27$, in geometric units}\footnote{Within our simulation set \cite{psi4_observations}, we found that this was typically the lowest value leading to no secular drifts. As we note, however, in some cases this choice removes ``true signal power'' from the detector band. In particular, for the total masses chosen for our GW190521-like and s200114f-like injections, our cutoff frequency $M\omega_0$ translates to 32.3 and 37.2 Hz respectively.}. In the second, we simply apply a regularization at the pole given by $\omega_0=0$, which we replace by the value for the lowest frequency multiplied by $10^{-4}$. We respectively label the resulting waveforms by F and NF sub-indexes, i.e., $h_{\text{F}}$, $(\delta^{2}h_\text{F})$  and $h_{\text{NF}}$,\ $(\delta^{2}h_\text{NF})$.

We recover these injections using different types of templates, as shown in Table~\ref{tab:injections}. We make two choices for the parameters $\theta$, corresponding to the two cases shown in Figs.~ \ref{fig:whitening_21gtemplate-1} and \ref{fig:whitening_14ftemplate}. These correspond to parameters consistent with GW190521 and the trigger S200114f under our $\Psi_4$ formalism. The most relevant difference between the corresponding simulations is the aggressiveness of the $\omega_0$ used to obtain $h(t)$. 

As mentioned in the previous section, in the case of the simulation consistent with GW190521, we find that this does not subtract significant power from the portion of the signal falling into the Advanced LIGO sensitive band while in the second case (the S200114f-like simulation) it does. The expectation is that for the first case, results obtained through $\Psi_4$ and all $h_{\text{F}}$-based analyses will be very similar; while in the second, those based on filtered waveforms will differ significantly. In particular, two types of differences are expected. First, if the frequency content of the waveforms is altered by the integration errors, this will limit the ability of the resulting strain waveforms (or rather, $(\delta^{2}h)(t)$) to fit the original $\Psi_4$. This will translate into both a reduction of the Bayes Factor that may bias model selection and into potential parameter biases. Second, since any choice of $\omega_0$ will remove a certain amount of signal power, this will result in an underrating of the strain-signal loudness. On the one hand, for identical parameters, this will lead to an under-estimation of the signal SNR. On the other hand, this will cause a bias in the distance estimate. 

The significance of the above effects in model selection and parameter inference depends on the signal loudness, as louder signals require more accurate templates in order to avoid analysis biases. We evaluate the impact of these biases under various observing scenarios, we consider three types of signal loudness, characterised by the optimal signal-to-noise ratio (SNR) of the injection modeled by $\Psi_4$. In the first case, where use the exact parameters best-fitting GW190521 and s200114f, the $\Psi_4$ injection has an SNR of $\simeq 15$ across the whole detector network, typical of current GW detections. Next, reducing the distance by a factor of 2, we study the case of signals with SNR $\simeq 30$, similar to the maximum SNR observed to date. Finally, we consider the case where the injection has an SNR of $\simeq 60$. \ncor{For simplicity, in what follows, we will use ``$=$'' signs to refer to these cases}. 

Finally, as shown in the previous section, if our $\Psi_4$ formalism is equivalent to the classical one based on strain, results obtained through the injection and recovery of $(\delta^{2}h_{\text{F/NF}})(t)$ and $h_{\text{F/NF}}(t)$ should be exactly equal, modulo the uncertainty associated to the sampling of the likelihood throughout the parameter space. \ncor{In fact, Table \ref{tab:injections} shows that the evidences obtained by such pairs of analyses differ at most by 0.5 (which would not impact our conclusions regarding model selection), even in the highest SNR cases where convergence is harder to achieve, with most cases ranging between 0 and 0.2.}

When sampling the likelihood, we fix the mass-ratio and spins of the templates to those of the injection and sampling only over the total red-shifted mass of the source $M_{\text{total}}$, the luminosity distance $d_{\rm L}$ and orientation angles $(\iota,\varphi)$, the sky-location angles $(\alpha,\delta)$, the polarization angle $\psi$ and the time-of-arrival. The power spectral densities used for our two injections are those of the Advanced LIGO and Virgo detectors at the times of GW190521 and S200114f. When analysing the corresponding $\Psi_4$ or $\delta^{2}h$ injections, we applied the correction factor in Eq.~\eqref{eq:psi4_psd} to obtain the appropriate PSDs. We sample the parameter space using the nested sampler \texttt{Dynesty} \cite{Dynesty} with 4096 live points for the cases with SNR=15 and 30, and 8192 live points for the cases with SNR = 60. 

\begin{figure*}
    \centering
    \includegraphics[height=.4\textheight]{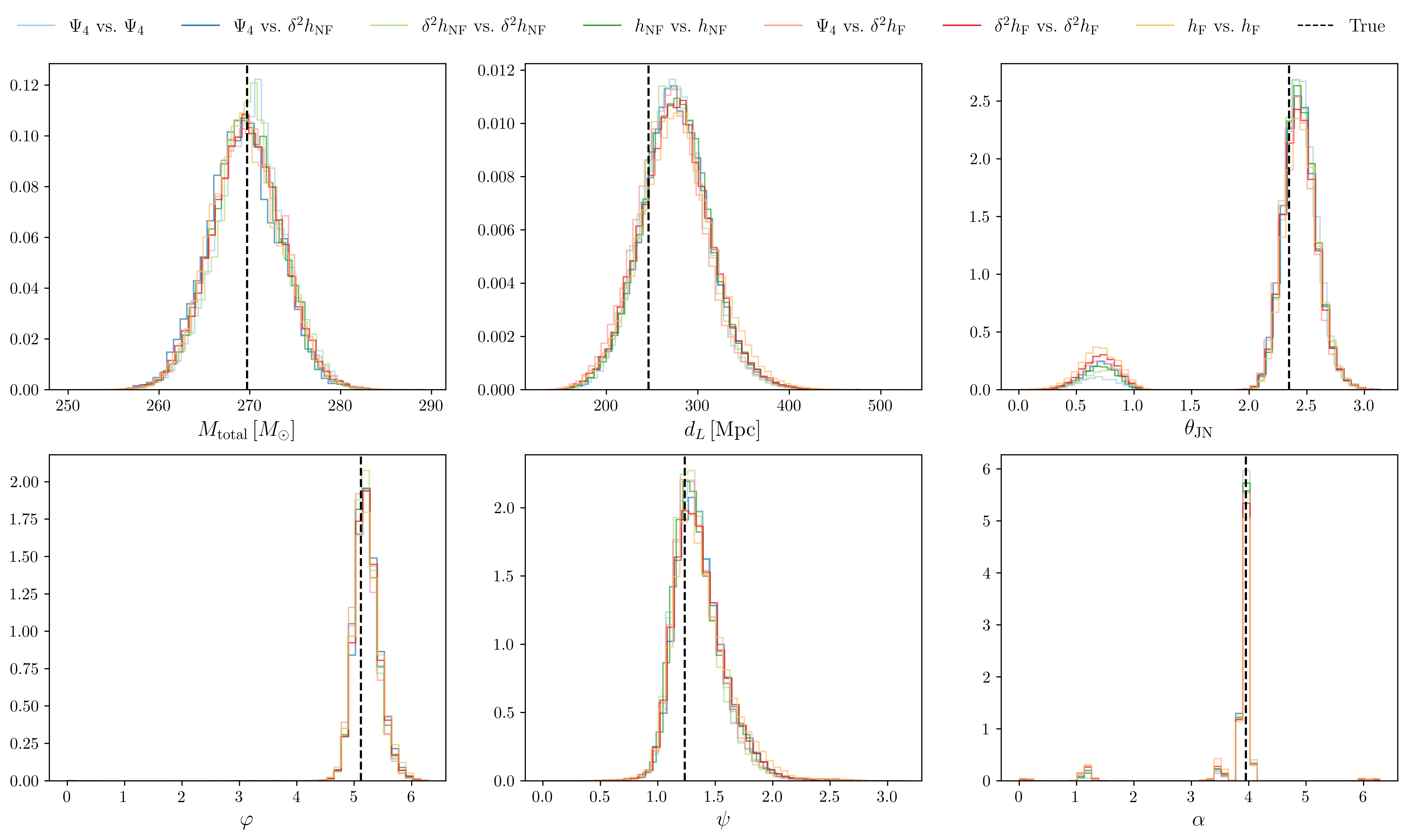}
    \caption{
    \textbf{Posterior parameter distributions for our GW190521-like injection, when scaled to a signal-to-noise ratio of 15} Posterior parameter distributions for our different analyses in Table \ref{tab:injections} together with the true value represented by a dashed line. The color code denotes the type of injection used ($\Psi_4$, strain $h(t)$ or strain-derived $\Psi_4$ denoted by $\delta^2 h(t)$), and the type of template. All analyses yield equivalent results. In particular, no significant difference is observed when filtered or non-filtered injections and templates are used. The parameter $\theta_{JN}$ describes the angle formed between the total angular momentum of the source and the line-of-sight. We note that since our sources do not precess, this is equal to the parameter $\iota$.}
    \label{fig:21gPE}
\end{figure*}

\begin{figure*}
    \centering
    \includegraphics[height=.4\textheight]{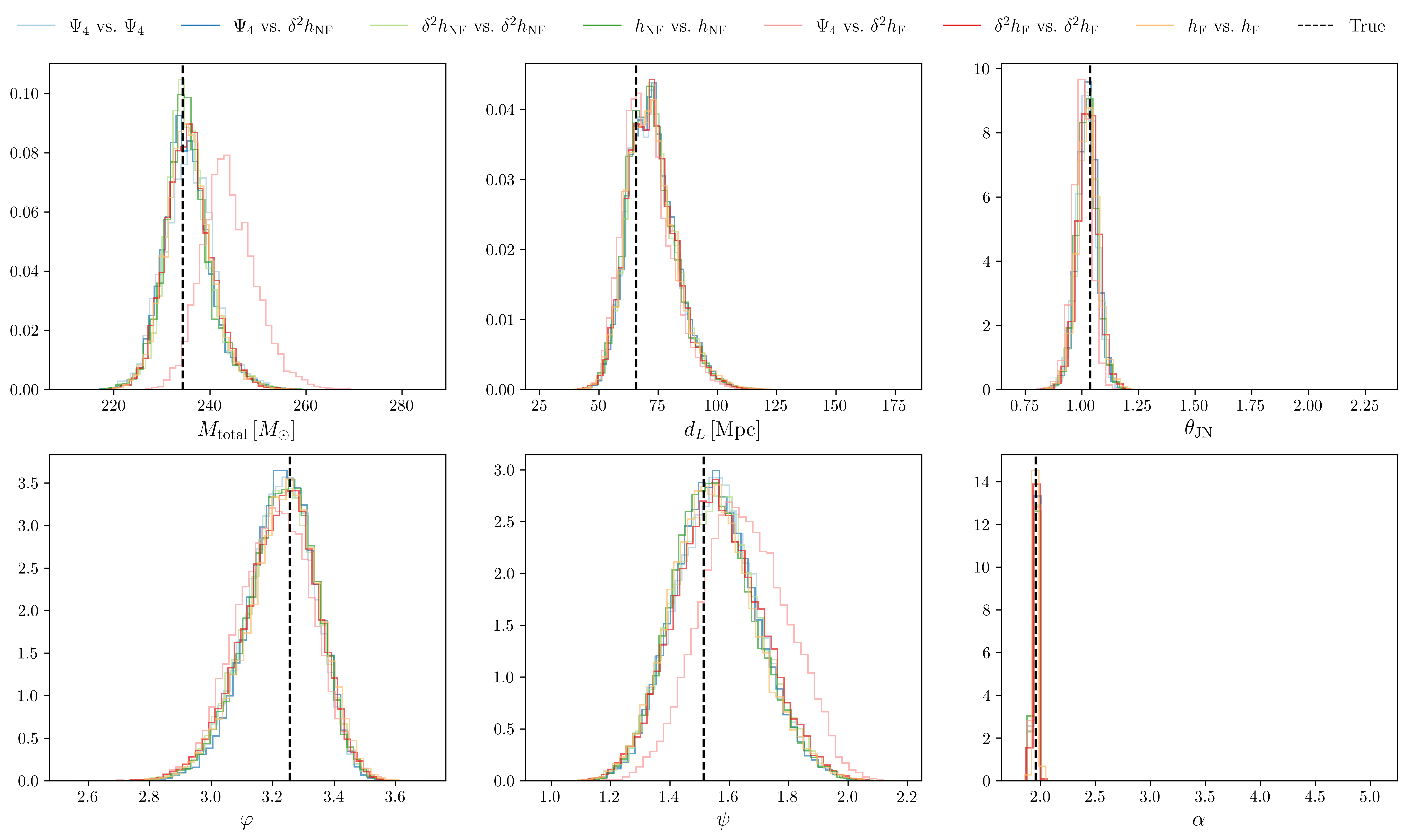}
    \caption{
    \textbf{Posterior parameter distributions for our S200114f-like injection, when scaled to a signal-to-noise ratio of 15} Posterior parameter distributions for our different analyses in Table \ref{tab:injections} together with the true value represented by a dashed line. The color code denotes the type of injection used ($\Psi_4$, strain $h(t)$ or strain-derived $\Psi_4 \equiv (\delta^2 h)(t)$) and the type of template. Recovering non-filtered injections with filtered waveforms leads to visible shifts in some posteriors equivalent results. This is due to the excessive aggressiveness of the integration filter.}
    \label{fig:14fPE}
\end{figure*}

\begin{figure}
    \begin{center}
        \includegraphics[width=0.49\textwidth]{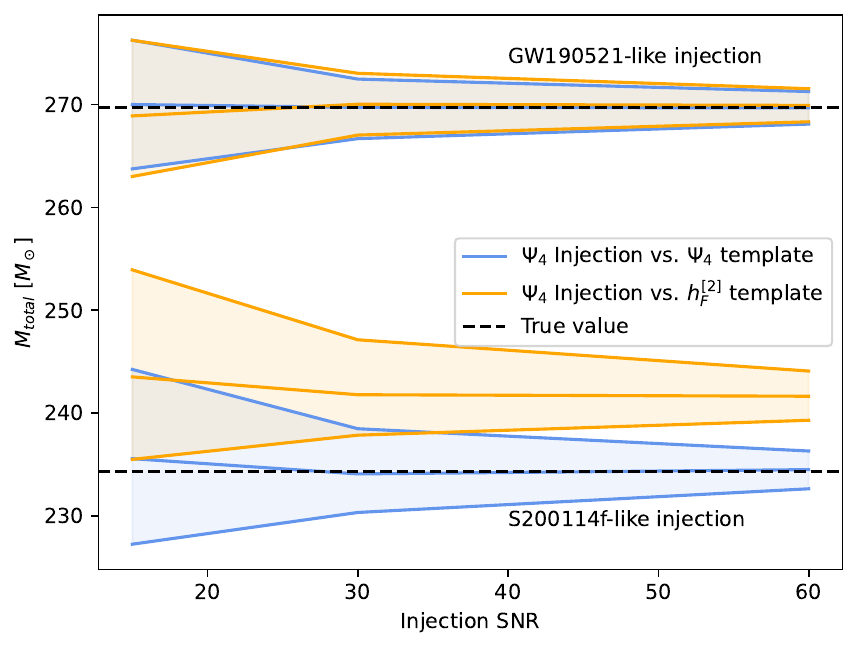}
        \caption{
        \textbf{Total mass bias due to  $\psi_4$-integration and filtering as function of signal loudness} The blue and orange contours denote the $90\%$ credible intervals for the total mass for the case of our GW190521-like and S200114f-like injections as a function of the injection signal-to-noise ratio. The injection is always modelled by $\Psi_4$, free of integration errors. The blue and orange contours denote, respectively, the result of analysing the injection with $\Psi_4$ itself and with the $(\delta^{2}h_F)$, which inherits the $\psi_4$-integration errors together with the power loss due to the choice of an integration low-frequency cutoff $\omega_0$.}.
        \label{fig:pe_snr}
    \end{center}
\end{figure}

\begin{figure*}
    \begin{center}
        \includegraphics[width=0.95\textwidth]{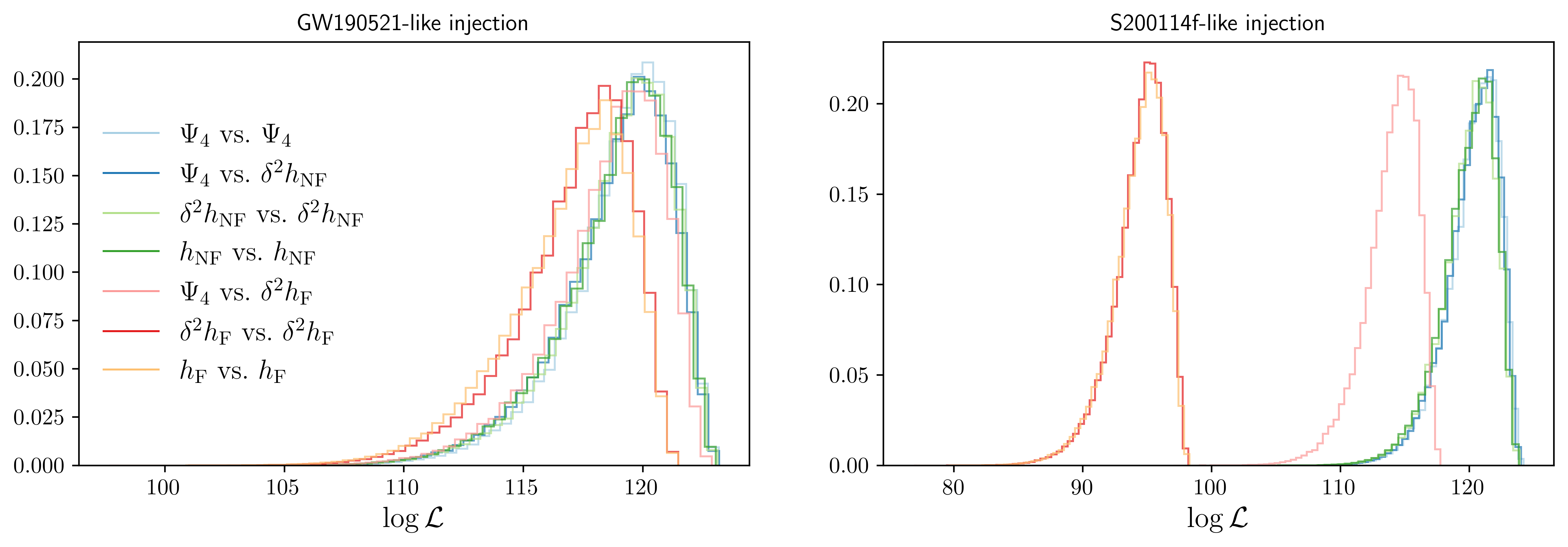}
        \caption{
        \textbf{Likelihood posterior distributions for our two sets of parameter inference runs} Left: Posterior distributions for the case of our GW190521-like injections. Right: same for our S200114f-like injections}.
        \label{fig:logl}
    \end{center}
\end{figure*}

\subsection{Results on simulated signals}

Table \ref{tab:injections} shows the natural log Bayes factor ($\log\cal{B}$) and the maximum log likelihood ($\log{\cal{L}}_{\text{max}}$) recovered by each of our analyses for each of our two types of injections. First, we note that in both cases, the analyses making use of strain waveforms $h_\text{F(NF)}$ and the corresponding second-order finite differenced waveforms $\delta^2 h_\text{F(NF)}$ yield equivalent results even for SNRs of 60, corroborating that our formalism does not introduce any artefacts. This means \textit{there is no fundamental reason to prefer an analysis based on strain}. Therefore, given that the $\Psi_4$ waveforms, directly obtained from $\psi_4$, avoid a complete layer of systematic errors, in the following we use the results based on the injection and recovery of $\Psi_4$ itself as our reference results. \ncor{Since we checked above that sampling errors introduce maximal uncertainties in the Bayes Factors of $\simeq 0.5$ (irrelevant for the purpose of model selection), we will assume that any significant difference between our reference analysis and the remaining ones are due to artefacts arising from the integration of $\psi_4(t)$ to obtain $h(t)$.}

Fig.~\ref{fig:21gPE} shows posterior parameter distributions for all injection-template combinations in Table~\ref{tab:injections}, for the case of our GW190521-like signal scaled to an SNR of 15 . The vertical bars show the true injection values. Similarly, Fig.~\ref{fig:14fPE} shows the same for our S200114f-like injection. In the first case, all distributions yield equivalent results. In particular, since our choice of $\omega_0$ barely affects the signal morphology, there is no significant difference between the posteriors obtained when injecting either $\Psi_4(t)$, $h_{\text{F}}(t)$ or $h_{\text{NF}}(t)$ and recovering with any of the relevant template models. We find the same is true when we raise the SNR to 30 and 60. In particular, the top contours of Fig. \ref{fig:pe_snr} show the symmetric $90\%$ credible intervals around the median obtained for the total mass as a function of the SNR of the $\Psi_4$ injection when this is recovered with $\Psi_4$ itself (blue) and $\delta h^{2}_\text{F}$, which carries potential integration and $\omega_0$-choice artefacts. Both these contours are essentially equal and converge to the true value for increasing SNR, indicating that, in this case, the aforementioned artefacts are mild enough to not to bias parameter estimation. However, the likelihoods reported in Table~\ref{tab:injections} and on the left panel of Fig.~\ref{fig:logl} show that the runs involving injections ($h_{\text{F}}(t)$ and $(\delta^2 h_{\text{F}})(t)$) reach slightly lower log-likelihoods due to the (small) power eliminated by the choice of $\omega_0$, which reduces the SNR of the injection. Nevertheless, as Table \ref{tab:injections} shows, for the cases with SNR = 15 and 30 such missing power does not cause changes in the Bayes factors that can lead to qualitatively different conclusions when performing model selection. Accordingly, we will later show that the analysis of GW190521 is not impacted at all by the usage of $h_F$ templates. This is however not the case when the SNR is raised to 60. In this situation, while parameter estimation is unaffected, we observe that analysing a true GW with our filtered waveforms causes a drop of $\simeq 6$ in the Bayesian evidence, sufficient to lead to model selection biases.

The situation is quite different for the case of our second injection. In this case, the choice of $\omega_0$ does significantly affect both the morphology and signal power of the waveform. As a consequence, Fig.~\ref{fig:14fPE} shows clear shifts of the posteriors for the total mass and the polarization angle when we recover the $\Psi_4$ injection with the $\delta^2 h_{\text{F}}$ templates \textit{at an SNR of only 15}, even if these are not completely inconsistent with the others. Figure \ref{fig:pe_snr} shows, however, that such shifts bias the estimate of the total mass when the SNR is above 30. This turns even more dramatic when evaluating the impact on the recovered log-likelihood and on model selection. The yellow distribution in the right panel of Fig.~\ref{fig:logl} shows that the ``F'' templates (e.g. $(\delta^2 h_{\text{F}})$) fail to recover a significant amount of power from $\Psi_4$, therefore dramatically reducing the maximum log likelihood. This leads to a significant drop in the Bayes factor, as shown in Table~\ref{tab:injections}, that can change qualitative conclusions concerning model selection even when the SNR is only of 15. In fact, as we will show later, using $h_F$ templates has strong consequences for the analysis of S200114f. Finally, as expected, injecting any of the ``F'' waveforms leads to a significant drop in the power present in the injection and, therefore, in the recovered power and Bayes factor.   

\section{Results on real data I: GW190521 as a boson-star merger}

We now demonstrate our framework on real GW data. In Ref.~\cite{Bustillo:2021proca1} we performed parameter estimation and model selection on 4 seconds of data around the time of GW190521, comparing this event to a ``vanilla'' quasi-circular BBH model employed by the LIGO-Virgo-KAGRA (LVK) collaboration \cite{GW190521D} and to a set of numerical simulations for Proca star mergers. Here we reproduce this analysis both using the classical strain formalism and our new $\psi_4$-based framework. We obtain our $\Psi_4$ input data by applying the transformations in Eqs.~\eqref{eq:second_order_finite_difference} and \eqref{eq:psi4_psd} to the public strain data and the corresponding strain PSDs. 

We compare GW190521 to a family of numerical simulations for head-on mergers of Proca stars with equal-mass and spin. The spin of the Proca stars can be directly mapped onto the bosonic field frequency, which is uniformly distributed in $\omega/\muB\in[0.80,0.93]$ with a resolution of $\Delta\omega/\muB=0.0025$. In addition, as in \cite{Bustillo:2021proca1}, we use a secondary exploratory family of unequal-mass mergers in which the frequency of one of the stars is fixed to $\omega_1/\muB=0.895$ and the other varies uniformly in $\omega_2/\muB\in[0.80,0.93]$. We perform model selection w.r.t. the classical circular BBH case for which we choose the waveform model \nrsur~\cite{NRSur7dq4} implemented in the \texttt{LALSuite} library \cite{lalsuite}. This model includes all gravitational-wave modes with $\ell \leq 4$ and is directly calibrated to precessing NR simulations with mass ratio $q \ncor{\;= m_1/m_2} \in [1,4]$ and individual spin magnitudes $a_i\in[0,0.8]$. Moreover, the model can be extrapolated to $q=6$ and $a_i=0.99$. In our original study~\cite{Bustillo:2021proca1}, we made use of the parameter estimation software \texttt{Bilby}~\cite{Ashton:2018jfp} and sampled the likelihood across the parameter space using nested sampler \texttt{CPNest}~\cite{CPNest}. In this work, however, we switch to the parallelizable version of \texttt{Bilby}, known as \texttt{Parallel Bilby}~\cite{pbilby}, and the sampler \texttt{Dynesty}~\cite{Dynesty}. Owing to this change in software, we also repeat our original analysis based on strain data. We impose the same parameter priors as in Ref.~\cite{Bustillo:2021proca1}, as detailed below. 

\subsubsection*{Bayesian priors}

\ncor{For the intrinsic source parameters, we consider uniform priors on the field frequency $\omega/\muB \in [0.80,0.93]$ and the total red-shifted mass $M\in[50,500]\,M_\odot$. For the extrinsic parameters, we impose a distance prior uniform in co-moving volume with $d_L\in [10,10000]$\,Mpc, flat priors on the polarisation angle and time of arrival, and isotropic priors on the source orientation and sky-location. For the BBH model, we set identical priors in all of the parameters shared with the BBS model. In addition, we set uniform priors on the dimensionless spin magnitudes and isotropic priors on their orientations. As in~\cite{Bustillo:2021proca1} we set a uniform prior on the mass ratio $q \in [1/6,1]$. Finally, we note that as in \cite{psi4_observations}, and as in the analysis of S200114f we describe later, we have tried an alternative prior uniform on the (inverse) mass-ratio $Q\in[1,6]$, and we have also tried to restrict the mass-ratio ranges to $Q\in[1,4]$ and $q\in[1/4,1]$. All of these yield evidences that differ, at most, by 0.2, therefore leading to identical conclusions regarding model selection.} 

\begin{figure*}
    \begin{center}
        \includegraphics[width=0.32\textwidth]{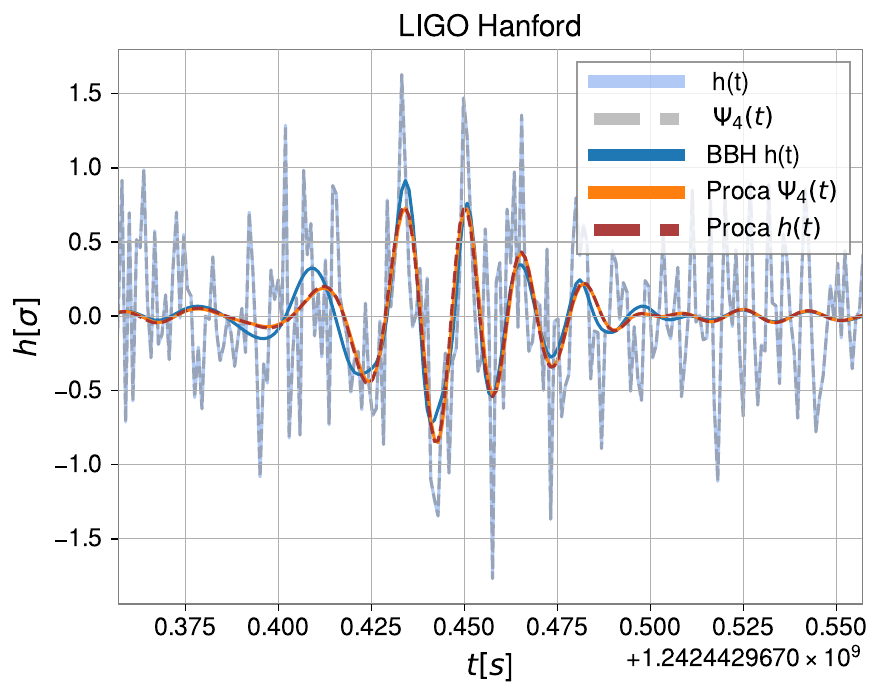}
        \includegraphics[width=0.32\textwidth]{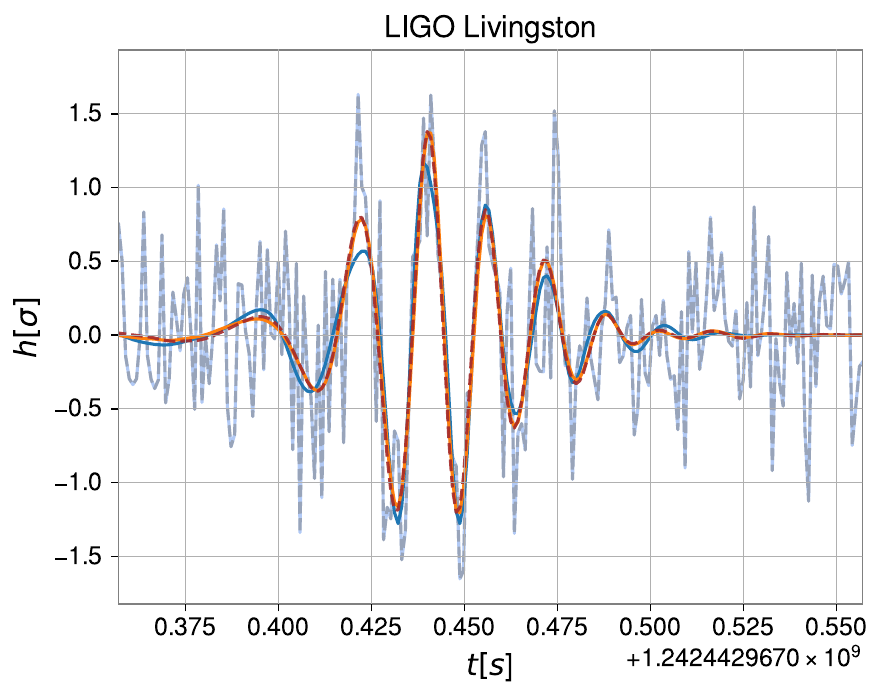}
        \includegraphics[width=0.32\textwidth]{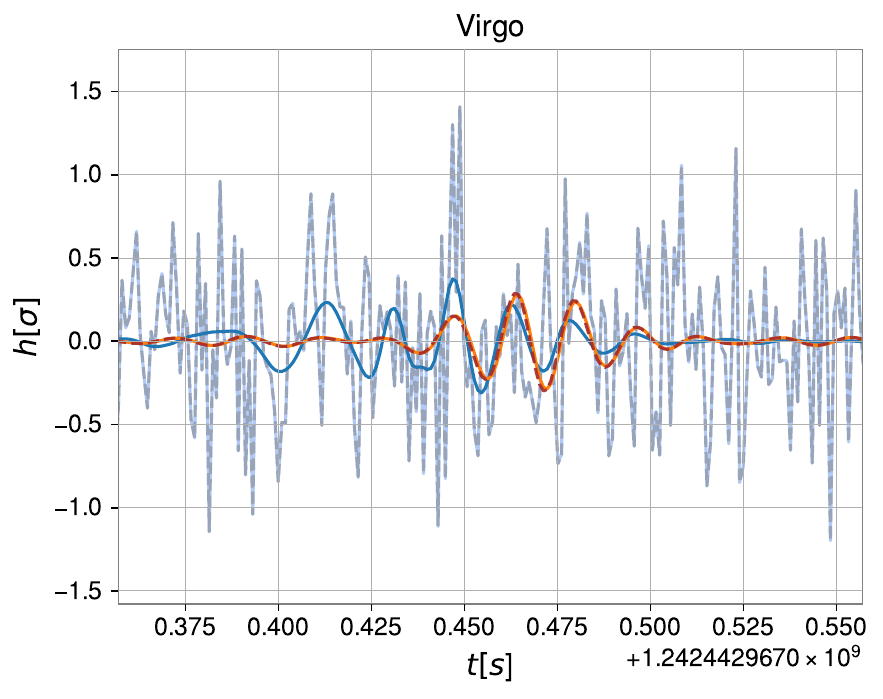}
        \caption{
        \textbf{Whitened detector data and maximum likelihood waveforms for GW190521} We show the whitened strain $h(t)$ (light blue) and $\Psi_4(t)$ (grey) detector data around the time of GW190521 together with the maximum likelihood waveforms returned by the BBH model \nrsur (blue) and by our equal-mass Proca star merger simulations obtained through strain (brown) and $\Psi_4(t)$ analyses (orange).}.
        \label{fig:maxL}
    \end{center}
\end{figure*}

\subsection{Model selection}

Fig.~\ref{fig:maxL} shows the whitened data of the Hanford, Livingston and Virgo detectors at the time of GW190521, for both the case of $h(t)$ and $\Psi_4(t)$. Together, we overlay the maximum likelihood waveforms returned by the BBH model and by both our strain and $\psi_4$-based analyses when using our equal-mass Proca star mergers. First, we note that the two latter analyses return essentially identical waveforms, once again showing that both analyses are equivalent modulo systematic errors coming from the obtention of $h(t)$. 

Table~\ref{tab:logb_21g} shows the natural logarithm of the Bayes factor ($\log\cal{B}$) for our different models under different choices of the distance prior. First, we note that for the same prior and waveform model, the $\Psi_4$ and strain analyses produce almost identical results. Second, consistently with Ref.~\cite{Bustillo:2021proca1}, the first column shows that when attaching to the standard distance prior which is uniform in co-moving volume, both the equal and unequal-mass models yield $\log\cal{B}$ only slightly larger than the BBH model. In particular, using $\Psi_4$ as our reference analysis, the  equal (unequal-mass) model is $e^{0.8}\simeq 2$ ($e^{1.7}\simeq 5.5$) times more probable than the BBH one. The second column shows results obtained under the assumption of a uniform distance prior. Although this can be considered to be rather nonphysical, this prior effectively removes the intrinsic bias towards louder sources (as circular BBHs) that can be observed from much further away than much weaker head-on mergers, which was induced by the previous prior. \ncor{Alternatively, results obtained under the uniform prior can be considered as crude estimates of what would happen once numerical simulations for (intrinsically louder) less eccentric configurations of Proca star mergers become available \footnote{Here, we make the crude assumption that, for suitable combinations of the binary parameters, we would be similar merger-ringdown signals as those obtained for our head-on mergers.}}. Once again, using our $\Psi_4$-analysis as a reference, the equal and unequal-mass models are favoured with probabilities $e^{3.3}=27:1$ and $e^{4.2}=67:1$ with respect to the BBH case. These results are perfectly consistent with those obtained from the analysis of strain data and with those reported in Ref.~\cite{Bustillo:2021proca1}.

\subsection{Parameter estimation}

Table~\ref{tab:pe} shows our parameter estimates for GW190521. We report median values together with symmetric $90\%$ credible intervals. The $q=1$ rows correspond to results obtained with our equal-mass simulations while the $q\neq 1$ rows correspond to those obtained with our exploratory unequal-mass ones. For each of these, we report results from the analysis of both strain and $\Psi_4$ data.

First, we note that, as expected from the previous section, strain and $\Psi_4$ produce almost identical results. Consistent with Ref.~\cite{Bustillo:2021proca1}, and taking $\Psi_4$-analysis using $q=1$ waveforms as a reference, we find that GW190521 can be interpreted as a head-on merger of two Proca stars with masses $117^{+5}_{-8}\,M_\odot$ that left behind a black-hole a final mass of $M_{\rm f}=233^{+12}_{-16}\,M_\odot$ and spin of $a_{\rm f}=0.70^{+0.04}_{-0.03}$; observed at a distance of $544^{+296}_{-163}$\,Mpc. Due to the much lower intrinsic loudness of head-on mergers, the inferred distance and total mass are in large contrast with those inferred by the LVK Collaboration, respectively $\simeq 5$\,Gpc and $\simeq 150\,M_\odot$. For the Proca stars, we infer a field frequency $\omega/\muB=0.895^{+0.15}_{-0.15}$. Combined with the total mass, this yields an ultralight boson mass of $8.60^{+0.63}_{-0.62}\times 10^{-13}$\,eV. Our analysis making use of unequal-mass stars yields consistent conclusions. In particular, it yields a boson-mass of $8.57^{+0.64}_{-0.67}\times 10^{-13}$\,eV.

\begin{table}
    \centering
    \begin{center}
        \begin{tabular}{c|cc}
            \rule{0pt}{3ex}%
            Waveform model  & \multicolumn{2}{c}{$\log\cal{B}$}  \\
            \hline
            \rule{0pt}{3ex}%
                            &  {\footnotesize Comoving Volume} & Uniform \\
                            \rule{0pt}{3ex}%
            BBH (\nrsur)  & 89.6  &  89.7    \\
            \rule{0pt}{3ex}%
            Proca $q=1$ $h(t)$  & 90.6  & 93.2     \\
            \rule{0pt}{3ex}%
            Proca $q\neq1$ $h(t)$  & 91.4  & 94.0   \\
            \rule{0pt}{3ex}%
            Proca $q=1$  $\Psi_4(t)$ & 90.4  & 93.0     \\
            \rule{0pt}{3ex}%
            Proca $q\neq1$  $\Psi_4(t)$ & 91.3  & 93.9   
        \end{tabular}
        \caption{\textbf{Model selection for GW190521} We report the natural Log Bayes factor obtained for our different waveform models. For the Proca star merger model, analyses done under the classical strain formalism and our $\Psi_4$-formalism are equivalent.}
        \label{tab:logb_21g}
    \end{center}
\end{table}

\begin{table*}
    \centering
    \begin{tabular}{lcc|cc}
        \rule{0pt}{3ex}%
        Parameter  & \multicolumn{2}{c}{Waveform model}  \\
        \hline
        \rule{0pt}{3ex}%
                   & \multicolumn{2}{c}{$q=1$} & \multicolumn{2}{c}{$q \neq 1$} \\
                   \hline
                   \rule{0pt}{3ex}%
                   &  $h(t)$ & $\psi_4(t)$ & $h(t)$ & $\psi_4(t)$ \\
                   \rule{0pt}{3ex}%
        Total Mass $[M_\odot]$                  & $258^{+12}_{-10}$  & $260^{+8}_{-9}$ &  $260^{+9}_{-10}$  &  $262^{+8}_{-7}$ \\
        \rule{0pt}{3ex}
        Total Source-frame Mass $[M_\odot]$     & $232^{+12}_{-17}$  & $233^{+12}_{-16}$ & $232^{+14}_{-15}$ & $232^{+14}_{-15}$   \\
        \rule{0pt}{3ex}
        Primary Source-frame Mass $[M_\odot]$   & $116^{+6}_{-9}$    & $117^{+5}_{-8}$  & $120^{+10}_{-9}$ & $120^{+9}_{-10}$ \\
        \rule{0pt}{3ex}
        Secondary Source-frame Mass $[M_\odot]$ & $116^{+6}_{-9}$    & $117^{+5}_{-8}$  & $111^{+8}_{-7}$  & $111^{+8}_{-7}$  \\
        \rule{0pt}{3ex}%
        Luminosity distance [Mpc]               & $541^{+305}_{-176}$   & $544^{+296}_{-163}$  & $592^{+358}_{-262}$ & $618^{+360}_{-244}$ \\
        \rule{0pt}{3ex}%
        Inclination [rad]                       & $0.83^{+0.23}_{-0.45}$ & $0.85^{+0.23}_{-0.36}$ & $0.67^{+0.36}_{-0.45}$ & $0.67^{+0.33}_{-0.45}$ \\
        \rule{0pt}{3ex}
        Final spin                              & $0.69^{+0.04}_{-0.04}$ & $0.70^{+0.04}_{-0.03}$ & $0.70^{+0.03}_{-0.05}$ & $0.71^{+0.02}_{-0.05}$   \\
        \rule{0pt}{3ex}%
        Primary field frequency $\omega_1/\muB$    & $0.890^{+0.018}_{-0.018}$   & $0.895^{+0.015}_{-0.015}$ & $0.895$                   & $0.895$ \\
        \rule{0pt}{3ex}
        Secondary field frequency $\omega_2/\muB$  & $0.890^{+0.018}_{-0.018}$   & $0.895^{+0.015}_{-0.015}$ & $0.900^{+0.018}_{-0.018}$ & $0.905^{+0.015}_{-0.018}$   \\
        \rule{0pt}{3ex}
        Boson mass $\muB\ [10^{-13}\,{\rm eV}]$     & $8.80^{+0.76}_{-0.93}$      & $8.60^{+0.63}_{-0.62}$    & $8.63^{+0.70}_{-0.68}$    & $8.57^{+0.64}_{-0.67}$ \\
        \rule{0pt}{3ex}%
        \rule{0pt}{3ex}%
    \end{tabular}
    \caption{\textbf{Parameters of GW190521 as a head-on Proca star merger} We report median values together with symmetric $90\%$ credible intervals under the scenario of an equal-mass, equal-spin merger and under our exploratory unequal-mass model. Columns labelled by $h(t)$ correspond to a ``classical'' analysis performed with strain-data and templates; while those labelled by \ncor{$\psi_4(t)$} make use of $\Psi_4(t)$-data and templates. We quote results corresponding to a distance prior uniform in co-moving volume.}
    \label{tab:pe}
\end{table*}

\section{Results on real data II: the trigger S200114f}

\begin{figure*}
    \begin{center}
        \includegraphics[width=0.32\textwidth]{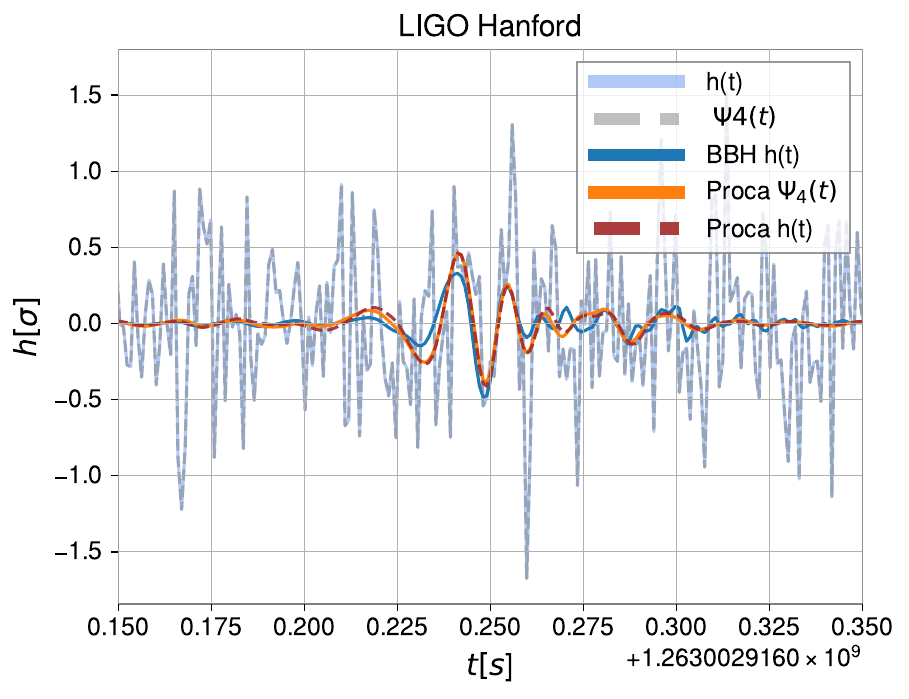}
        \includegraphics[width=0.32\textwidth]{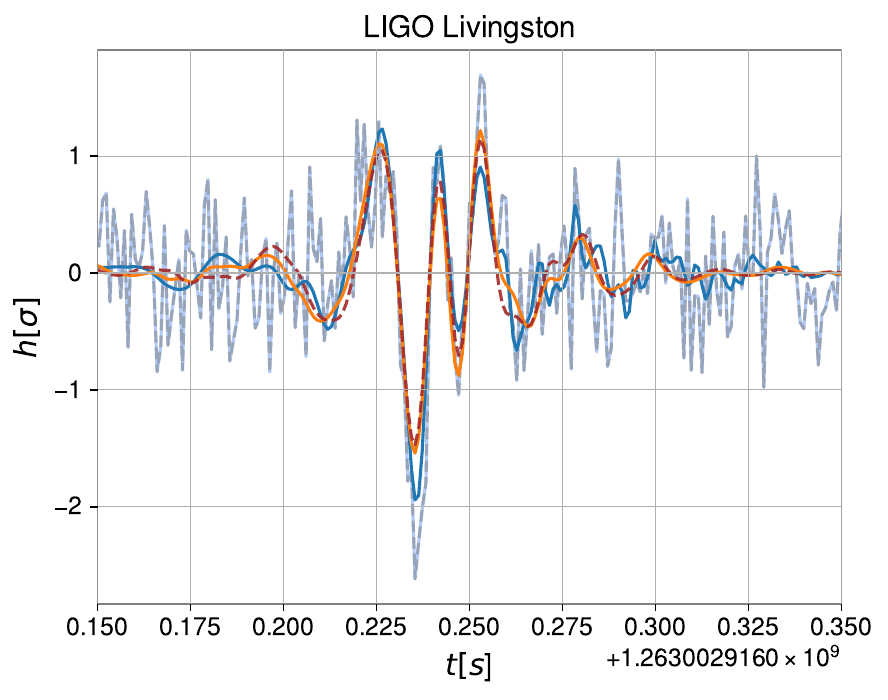}
        \includegraphics[width=0.32\textwidth]{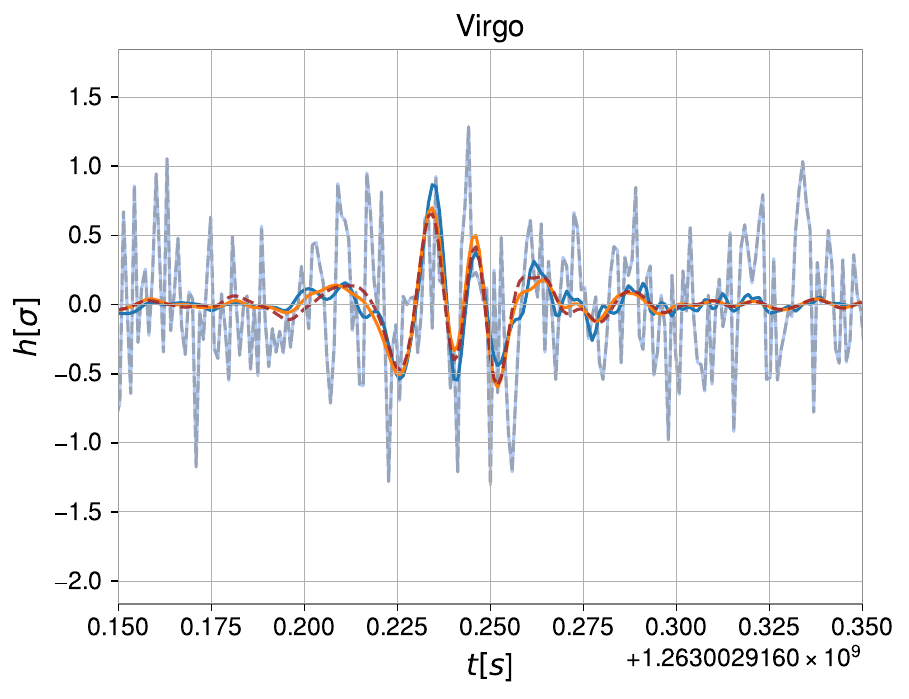}
        \caption{
        \textbf{Whitened detector data and maximum likelihood waveforms for S200114f} We show the whitened strain $h(t)$ (light blue) and $\Psi_4(t)$ (grey) detector data around the time of GW190521 together with the maximum likelihood waveforms returned by the BBH model \nrsur (blue) and by our equal-mass Proca star merger simulations obtained through strain (brown) and $\Psi_4(t)$ analyses (orange). Unlike in Fig. 9 for the case of GW190521, in this case, the brown and orange waveforms show visible differences particularly visible in the early part of the waveforms. These translate into a worse fit to the data in the brown case and qualitatively different conclusions in terms of model selection.}.
        \label{fig:maxL_14f}
    \end{center}
\end{figure*}

Finally, we show a real data example for which our framework makes an important difference. The trigger S200114f is a LIGO-Virgo high-mass trigger detected by a model agnostic \ncor{search} that identifies coherent excess power across the detector network, known as \texttt{coherent WaveBurst} \cite{Klimenko:2015ypf}, with a highly significant false-alarm-rate of 1/17 yr \cite{O3IMBH}. This trigger has, however, challenged existing waveform models. In particular, while the LVK Collaboration analysed S200114f under three BBH waveform models, no pair of these models returned consistent parameter estimates. Far from indicating that this trigger is not of astrophysical origin, this a symptom that the mentioned waveform models disagree in the regions of the parameter space where they best reproduce the signal. \ncor{However}, this trigger is morphologically consistent with a type of noise transients known as Tomte glitches \cite{Merritt2021_tomtes}. In this situation, the LVK decided not to classify it as a confident or catalogued detection but, importantly, nor was it classified as a noise trigger.\\

The above characteristics make S200114f a tantalising candidate to compare to our  simulation catalogue of Proca-star mergers, in the same way we previously treated GW190521. We note, however, that because we find that S200114f was poorly reproduced by the small simulation sets mentioned above, here we use an enhanced bank of nearly 759 simulations spanning a grid in the two-star oscillation frequencies the space $\omega_{1,2}\ncor{/\mu_{\rm B}} \in [0.80,0.93]$. Due to this enhancement, we use priors that slightly differ from those of the GW190521 analysis, which we specify below. Just as for the case of GW190521, we perform our comparison to the BBH model \texttt{NRSur7dq4} within the strain framework while for the Proca-star case we use both the strain and $\Psi_4$ formalisms. Unlike in the case of GW190521, however, we find that both methods return significantly different results that arise from the fact that integration/filtering issues affect the best-fitting strain waveforms.\\

\subsubsection*{Bayesian Priors}

For the PSM model, we impose the same priors as in our GW190521 study, with the exception of the field frequencies. \ncor{For these, we impose a prior uniform across the triangle defined by $\omega_{1,2}\ncor{/\mu_{\rm B}} \in [0.80,0.93]$, with $\omega_{1}\ncor{/\mu_{\rm B}} \geq \omega_{2}\ncor{/\mu_{\rm B}}$. Finally, for the BBH model, we impose the same priors as for the GW190521 case. However, from the four mass-ratio priors we discuss we retain the one yielding the largest Bayesian evidence for the $\Psi_4$ run. The goal of this is to be as conservative as we can in our statements in favour of the existence of Proca stars. These priors are identical to those imposed in \cite{psi4_observations}, where we refer the reader to for further details.}.

\subsection{Model selection}

Fig. \ref{fig:maxL_14f} shows the whitened time series from the three detectors around the time of S200114f. Overlaid, we show the maximum likelihood waveforms for the BBH case and for the Proca-star merger case, the latter both in terms of strain and $\Psi_4$. Unlike for the case of GW190521, the latter two (brown and orange) clearly differ. This difference is particularly visible in the early pre-merger part of the signal, which is most prone to be affected by integration artefacts and choices of the integration frequency cutoff. Additional differences are also observable in the late ringdown part. While visually mild,  such disagreement drives dramatically different values of the corresponding likelihood, which is 6 e-folds larger in the $\Psi_4$ case. This, in turn, has a great impact on model selection, \ncor{as shown in Table \ref{tab:bayes_14f}}. While under the $\Psi_4$ formalism we obtain $\log {\cal{B}} = 2.0$, this is reduced to $\log {\cal{B}} = -7.6$ when using strain templates, in the case where we use our distance prior uniform in co-moving volume. In other words, while the trigger is slightly preferred as a Proca-star merger under the artefact-free $\Psi_4$ analysis, such an option is conclusively discarded under the strain analysis due to the artefacts arisen during the waveform integration process. We note that this result is qualitatively consistent with that returned by the noise-free injection study described in section VI. C. In particular, the $\log {\cal{B}}$ reported in the first ($\Psi_4$ vs. $\Psi_4$) and fifth ($\Psi_4$ vs. $\delta^{2}h_F$) rows of the fourth column of Table \ref{tab:injections} differ by 6.1 units, as compared to the 9.2 units we obtain in real data. Finally, a similarly dramatic effect is observed when we use a prior uniform in distance. In this case, the usage of strain waveforms causes a reduction from $\log {\cal{B}}=5.3$ to  $\log {\cal{B}}=-0.3$, i.e., from a strong preference for the Proca-star scenario to rather equal preference for both scenarios.\\

\begin{table}
    \centering
    \begin{center}
        \begin{tabular}{c|cc}
            \rule{0pt}{3ex}%
            Waveform model  & \multicolumn{2}{c}{$\log\cal{B}$}  \\
            \hline
            \rule{0pt}{3ex}%
                            &  {\footnotesize Comoving Volume} & Uniform \\
                            \rule{0pt}{3ex}%
            BBH (\nrsur)  & 69.1  &  71.0    \\
            \rule{0pt}{3ex}%
            Proca $h(t)$  & 61.5  & 69.7     \\
            \rule{0pt}{3ex}%
            Proca $\Psi_4(t)$ & 71.1  & 76.3     \\ 
        \end{tabular}
        \caption{\textbf{Model selection for s200114f} We report the natural Log Bayes factor obtained for our different waveform models under different signal models and distance priors.}
        \label{tab:bayes_14f}
    \end{center}
\end{table}

\subsection{Parameter estimation}

Finally, for the sake of completeness, Table \ref{tab:pe_14f} shows our parameter estimates for S200114f under both the strain and $\Psi_4$ analyses. First, clear differences arise in the estimated luminosity distance, total redshifted mass, star frequency and spin parameters. These translate into biases in the boson mass and source-frame mass estimates. In particular, the boson-mass estimate from the strain formalism is highly consistent with that of GW190521 while it becomes highly inconsistent if using the integration-error-free $\Psi_4$ waveforms.

\begin{table}
    \centering
    \begin{tabular}{lcc}
        \rule{0pt}{3ex}%
        Parameter  & \multicolumn{2}{c}{Waveform model}  \\
        \hline
        \rule{0pt}{3ex}%
                   &  $h(t)$ & $\psi_4(t)$   \\
                   \hline

                   \rule{0pt}{3ex}%
        Total Mass $[M_\odot]$                  & $233^{+15}_{-29}$  & $215^{+18}_{-15}$  \\
        \rule{0pt}{3ex}
        Total Source-frame Mass $[M_\odot]$     & $228^{+17}_{-29}$  & $207^{+16}_{-14}$   \\
        \rule{0pt}{3ex}
        Primary Source-frame Mass $[M_\odot]$   & $123^{+8}_{-14}$    & $119^{+9}_{-14}$   \\
        \rule{0pt}{3ex}
        Secondary Source-frame Mass $[M_\odot]$ & $107^{+7}_{-17}$    & $88^{+16}_{-7}$    \\
        \rule{0pt}{3ex}%
        Luminosity distance [Mpc]               & $88^{+109}_{-28}$   & $152^{+73}_{-61}$   \\
        \rule{0pt}{3ex}%
        Inclination [rad]                       & $1.03^{+0.32}_{-0.45}$ & $0.91^{+0.50}_{-0.24}$  \\
        \rule{0pt}{3ex}
        Final spin                              & $0.63^{+0.07}_{-0.01}$ & $0.66^{+0.03}_{-0.04}$   \\
        \rule{0pt}{3ex}%
        Primary field frequency $\omega_1/\muB$    & $0.887^{+0.033}_{-0.019}$   & $0.919^{+0.006}_{-0.043}$  \\
        \rule{0pt}{3ex}
        Secondary field frequency $\omega_2/\muB$  & $0.833^{+0.040}_{-0.025}$   & $0.810^{+0.062}_{-0.010}$    \\
        \rule{0pt}{3ex}
        Boson mass $\muB\ [10^{-13}\,{\rm eV}]$     & $9.67^{+0.67}_{-0.49}$      & $10.20^{+0.68}_{-0.55}$  \\
        \rule{0pt}{3ex}
        $\log{\cal{B}}^{\text{Proca-star}}_{\text{BBH}} $     & -7.6      & 2.0  \\
        
        \rule{0pt}{3ex}%
        \rule{0pt}{3ex}%
    \end{tabular}
    \caption{\textbf{Parameters of S200114f as a head-on Proca-star merger} We report median values together with symmetric $90\%$ credible intervals. The column labelled by $h(t)$ corresponds to a ``classical'' analysis performed with strain-data and templates; while that labelled by $\psi_4(t)$ makes use of $\Psi_4(t)$-data and templates. We quote results obtained under a distance prior uniform in co-moving volume.}
    \label{tab:pe_14f}
\end{table}

\section{Discussion of results: integrated strain vs. Newman-Penrose scalar}
Given our results, both on synthetic signals and on real data, the question arises of: what is special about the waveforms reproducing S200114f that makes strain waveforms problematic, as opposed to the case of GW190521?

The answer is: nothing in principle. As stated throughout the paper, obtaining ``integration-error free'' strain waveforms depends on a series of human choices (in particular that of $\omega_0$) that are only reasonably well guided for the case of quasi-circular mergers. In other cases, obtaining clean strain waveforms through FFI
(\textit{if possible at all}) involves a way more convoluted trial-and-error process. Moreover, we understand that in the absence of a ``true'' reference waveform, concluding that the obtained waveform is correct can only be done through the comparison to the Newman-Penrose one, similarly to what we do in the right panels of Figs. \ref{fig:whitening_21gtemplate-1} and \ref{fig:whitening_14ftemplate}.

In this situation, we can only make the \textit{everything but scientific argument} that the integration choices that happened to work correctly for the waveforms best fitting GW190521 (making our results in \cite{Bustillo:2021proca1} safe from integration artefacts), did not work for the waveforms best fitting S200114f. \ncor{This is, at least partially, explained by the fact that our choice of M$\omega_0$ does remove more true signal power for S200114f than for the GW190521-like injection.} In this situation, it could be argued that better strain waveforms may have been obtained if further exploration of $\omega_0$-choices was performed, \ncor{although there is no guarantee that this process would lead to a driftless waveform containing all the true signal power. In fact, FFI does by definition eliminate some true signal power even in quasi-circular cases}. In our view, this exactly exemplifies the huge advantage that the usage of the Newman-Penrose scalar has over that of the integrated strain: the resulting waveforms are ``uniquely defined'', with no choices to be made beyond those pertaining the specific configuration of the numerical simulation itself.

\section{Conclusions}
Extracting the properties of GW sources requires accurate waveform templates that can be compared to detector data. NR provides the most precise way to obtain such templates and it is often the only way. Computing GW strain waveforms from $\psi_4$ outputted by NR simulations that can be compared to the strain detector data is a non-trivial process subject to well-known systematics that can impact the physical interpretation of the source. Moreover, easing these errors is a rather artisan process subject to human choices that are not always obvious or even well-motivated depending on the considered type of source. This is particularly problematic for some of the most astrophysically interesting sources LIGO and Virgo are starting to observe, like precessing mergers \cite{GWTC3,Hannam_nature_precession}, or may observe in the future observing runs, like eccentric mergers or dynamical captures, for which there is no monotonic relation between GW frequency and time. We note that, even in cases where such relation exists, typical systematic errors of $\sim 1\%$ in amplitude will always exist. Moreover, these are in practice impossible to know because the true waveform is not known \cite{Pollney_Reissweig}. By taking second-order finite differences on the detector strain data, we have presented a data analysis framework that allows to directly compare GW data to $\psi_4$, removing the need to extract the GW strain from numerical simulations and the associated systematic errors. We have shown that our framework is equivalent to the traditional strain one modulo the potential systematic errors present in the strain waveforms. Therefore, given that $\Psi_4$-waveforms have one less layer of systematic errors than strain ones, classical strain analyses will, at best, be as faithful as $\Psi_4$-based ones.\\ 

As a demonstration of our framework in real data analysis, we have first repeated our previous study comparing GW190521 to numerically simulated strain waveforms from Proca star mergers presented in Ref.~\cite{Bustillo:2021proca1}, but using the direct $\Psi_4$ outputted by our numerical simulations. We obtain results completely consistent with the original ones, which is indicative that our strain waveforms best-fitting GW190521 suffered, at most, from mild integration errors that did not impact our original analysis. Second, we have analysed the high-mass trigger S200114f using an enhanced catalog of Proca-star merger simulations reported in \cite{psi4_observations}. In this case, we find that the usage of strain waveforms -- affected by integration errors -- has a huge impact in the interpretation of this signal, yielding conclusions that differ dramatically with respect to those obtained by using error-free $\Psi_4$ waveforms. 
\\

Our framework removes the need to obtain strain waveforms from numerical relativity simulations, removing a complete layer of systematic errors. We note that while we have focused on the case of short numerical relativity simulations with rather ``exotic'' dynamics, the integration errors worsen with increasing waveform length. In particular, this makes such errors particularly troublesome in the task of constructing hybrid numerical-relativity - post-Newtonian waveforms \cite{OhmeThesis} that are matched are early times \cite{Pollney_Reissweig}. 

While we have discussed our procedure under the prevalent scenario where the transverse-traceless (TT) gauge is considered, its application to alternative gauge-independent formulations \cite{KoopFinn} is, in principle, straight forward. Similarly, while we have demonstrated our framework in the context of parameter inference and model selection, this is trivially applicable to the case of actual matched-filter searches for GW signals  ~\cite{Usman:2015kfa,Messick:2016aqy,Harry:2017weg,MBTA,SPIIR,Chandra2022}. Finally, we note that since LVK results have so far been obtained under the assumption of quasi-circular mergers, we have no reasons to believe such results may be affected by the errors we have discussed here.

\section*{Acknowledgements} 

\textcolor{black}{We plan to publish and mantain our code to perform gravitational-wave data analysis using the Newman-Penrose scalar, within the software \texttt{Bilby} \cite{Ashton:2018jfp,pbilby} at \cite{BilbyProca}.}
The analysed LIGO-Virgo data and the corresponding power spectral densities, in their strain versions, are publicly available at the online Gravitational-Wave Open Science Center \cite{SoftwareX,open_data}.This research has made use of data or software obtained from the Gravitational Wave Open Science Center (gwosc.org), a service of LIGO Laboratory, the LIGO Scientific Collaboration, the Virgo Collaboration, and KAGRA. LIGO Laboratory and Advanced LIGO are funded by the United States National Science Foundation (NSF) as well as the Science and Technology Facilities Council (STFC) of the United Kingdom, the Max-Planck-Society (MPS), and the State of Niedersachsen/Germany for support of the construction of Advanced LIGO and construction and operation of the GEO600 detector. Additional support for Advanced LIGO was provided by the Australian Research Council. Virgo is funded, through the European Gravitational Observatory (EGO), by the French Centre National de Recherche Scientifique (CNRS), the Italian Istituto Nazionale di Fisica Nucleare (INFN) and the Dutch Nikhef, with contributions by institutions from Belgium, Germany, Greece, Hungary, Ireland, Japan, Monaco, Poland, Portugal, Spain. KAGRA is supported by Ministry of Education, Culture, Sports, Science and Technology (MEXT), Japan Society for the Promotion of Science (JSPS) in Japan; National Research Foundation (NRF) and Ministry of Science and ICT (MSIT) in Korea; Academia Sinica (AS) and National Science and Technology Council (NSTC) in Taiwan. JCB is supported by a fellowship from ``la Caixa'' Foundation (ID
100010434) and from the European Union’s Horizon 2020 research and innovation programme under the Marie Skłodowska-Curie grant agreement No 847648. The fellowship code is LCF/BQ/PI20/11760016. JCB is also supported by the research grant PID2020-118635GB-I00 from the Spain-Ministerio de Ciencia e Innovaci\'{o}n. KC acknowledges the MHRD, Government of India, for the fellowship support.
JAF is supported by the Spanish Agencia Estatal de Investigaci\'on  (grants PGC2018-095984-B-I00 and PID2021-125485NB-C21) and by the  Generalitat  Valenciana  (PROMETEO/2019/071). This work is supported by the Center for Research and Development in Mathematics and Applications (CIDMA) 
through the Portuguese Foundation for Science and Technology (FCT - Funda\c {c}\~ao para a Ci\^encia e a Tecnologia), reference UIDB/04106/2020, and by national funds (OE), through FCT, I.P., in the scope of the framework contract foreseen in the numbers 4, 5 and 6 of the article 23, of the Decree-Law 57/2016, of August 29, changed by Law 57/2017, of July 19.
We also acknowledge support  from  the  projects  PTDC/FIS-OUT/28407/2017,  CERN/FIS-PAR/0027/2019, PTDC/FIS-AST/3041/2020,  CERN/FIS-PAR/0024/2021 and 2022.04560.PTDC. NSG is supported by the Spanish Ministerio de Universidades, through a María Zambrano grant (ZA21-031) with reference UP2021-044, funded within the European Union-Next Generation EU.   
This work has further been supported by the European Union’s Horizon 2020 research and innovation (RISE) programme H2020-MSCA-RISE-2017 Grant No. Fu\text{NF}iCO-777740 and by the European
Horizon Europe staff exchange (SE) programme HORIZON-
MSCA-2021-SE-01 Grant No. NewFunFiCO-101086251. 
We acknowledge the use of IUCAA LDG cluster Sarathi for the computational/numerical work. The authors acknowledge computational resources provided by the CIT cluster of the LIGO Laboratory and supported by National Science Foundation Grants PHY-0757058 and PHY0823459; and the support of the NSF CIT cluster for the provision of computational resources for our parameter inference runs. This material is based upon work supported by NSF's LIGO Laboratory which is a major facility fully funded by the National Science Foundation. This manuscript has LIGO DCC number P2200114. 

\appendix

\section{Relation between the Fourier transform of the second order difference and the second derivative of a time series} \label{app:relation}

\begin{theorem}
    Given a continuous-time time series $x(t)$ where $t \in (0, T)$ of duration $T$ and the sampled time series $x[m]$ of sampling interval $\Delta t$ where $m = 0, 1, ..., M - 1$, i.e.\ $M\Delta t = T$, if the Fourier transform of the continuous-time time series and the discrete Fourier transform of the sampled time series are equivalent i.e.\ $\tilde{x}(k\Delta f) = \tilde{x}[k]$ where
\begin{equation}
    \tilde{x}(f) = \int_{0}^{T}x(t)e^{-i2\pi ft} dt \,,
\end{equation}
\begin{equation}
    \tilde{x}[k] = \Delta t \sum_{\ncor{m}=0}^{M - 1}x[m]e^{-i2\pi mk / M} \,,
\end{equation}
and $\Delta f = 1 / T$, and the second derivative of $x(t)$ exists at every point in $(0, T)$, then the Fourier transform of the second derivative of $x(t)$ and the discrete Fourier transform of the second difference of $x[m]$ are related by
    \begin{equation}
        \label{eq:app_relation}
        \widetilde{\delta^{2}x}[k] = 
        \frac{1-\cos(2\pi k \Delta f \Delta t)}{2\pi^{2}(k\Delta f \Delta t)^{2}}\widetilde{x''}(k \Delta f)
    \end{equation}
    where $x''(t)$ is the second derivative of $x(t)$, and $\delta^{2}x[m]$ is the second difference of $x[m]$ defined by
    \begin{equation}
        \delta^{2}x[m] = 
        \frac{x[m + 1] - 2x[m] + x[m - 1]}{(\Delta t)^{2}}\,.
    \end{equation}
\end{theorem}

\begin{proof}
    The Fourier transform of $x(t)$ and $x''(t)$ are related by
    \begin{equation}
        \label{eq:app_xf}
        \widetilde{x''}(f) = -4\pi^{2}f^{2}\widetilde{x}(f)\,.
    \end{equation}
    The discrete Fourier transform of $x[m]$ and $(\delta^{2}x)[m]$ are related by
    \begin{equation}
        \label{eq:app_xk}
        \widetilde{(\delta^{2}x)}[k] = \frac{2(\cos(2\pi k /M) - 1)}{(\Delta t)^{2}}\,\widetilde{x}[k] \,.
    \end{equation}
    If $\tilde{x}(k\Delta f) = \tilde{x}[k]$, from Eq.~\eqref{eq:app_xf} and Eq.~\eqref{eq:app_xk}, we have
     \begin{equation}
        \widetilde{\delta^{2}x}[k] = 
        \frac{1-\cos(2\pi k \Delta f \Delta t)}{2\pi^{2}(k\Delta f \Delta t)^{2}}\widetilde{x''}(k \Delta f)\,.
    \end{equation}
\end{proof}
\textcolor{black}{
When assuming $\tilde{x}(k\Delta f) = \tilde{x}[k]$, it is important to note that aliasing and spectral leakage are intrinsic to discrete Fourier transforms. These issues also arise in ``regular'' GW data analysis when computing the discrete Fourier transform of strain data. Proper data windowing, such as employing a Tukey window as we do, mitigates these errors by \textcolor{black}{tapering} data to zero at the ends of the segment while preserving the GW signal segment. Importantly, our method does not introduce any new sources of systematic errors. Therefore, the errors discussed above are equivalent to those encountered in regular GW data analysis and are considered for the sake of rigor.}

\textcolor{black}{In this context, the discrete Fourier transform of the sampled time series closely approximates the Fourier transform of the continuous-time series, validating Eq.~\eqref{eq:app_relation}. Nevertheless, accuracy depends on precise discrete Fourier transform usage, necessitating caution regarding aliasing and spectral leakage. Appropriate data windowing is essential when applying the discrete Fourier transform.
}

\section{Distribution of second-differenced noise}
\label{app:distr_delta_2}

\begin{theorem}
    \label{thm:distr_delta2}
    Let $\boldsymbol{x}\in\mathbb{R}^{M}$ be a discrete-time Gaussian process such that the mean is $\mathbb{E}\left[\boldsymbol{x}\right] = \boldsymbol{\mu}$ and the covariance is $\mathbb{E}\left[(\boldsymbol{x} - \boldsymbol{\mu})(\boldsymbol{x} - \boldsymbol{\mu})^{T}\right] = \boldsymbol{\Sigma}$. Let $\delta^{2}\boldsymbol{x}\in\mathbb{R}^{M}$ be the second difference of $\boldsymbol{x}$ defined by
    \begin{equation}
        (\delta^{2}x)[m] = \frac{x[m+1] - 2x[m] + x[m-1]}{(\Delta t)^{2}}
    \end{equation}
    where $x[m]$ is the $m$-th element of $\vb*x$, and $M > 2$ and $\Delta t$ are respectively the total length and the sampling interval of the discrete-time process respectively. A periodic boundary condition is imposed such that $x[0] = x[M]$ and $x[M+1] = x[1]$, then $\delta^{2}\boldsymbol{x}$ is a discrete-time Gaussian process with mean $\boldsymbol{\mu}_{2} = \boldsymbol{T}\boldsymbol{\mu}$ and covariance $\boldsymbol{\Sigma}_{2} = \boldsymbol{T}\boldsymbol{\Sigma}\boldsymbol{T}$ where $\boldsymbol{T}$ is a $M\times M$ matrix with entries $T_{j,j} = -2 / (\Delta t)^{2}$, $T_{j,j+1} = T_{j,j-1} = 1 / (\Delta t)^{2}$ with a periodic boundary condition imposed on the matrix index i.e.\ index $0$ implies index $M$ and index $M+1$ implies index $1$, and otherwise zero.
\end{theorem}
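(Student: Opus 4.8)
The plan is to observe that the second difference with periodic boundary conditions is nothing more than a fixed linear map, $\delta^{2}\boldsymbol{x} = \boldsymbol{T}\boldsymbol{x}$, and then to invoke the elementary fact that linear images of Gaussian vectors are Gaussian with transformed mean and covariance. First I would verify the matrix representation explicitly: writing $(\delta^{2}x)[m] = (\Delta t)^{-2}\big(x[m+1] - 2x[m] + x[m-1]\big)$ for each $m = 1,\dots,M$ and substituting the stated wraparound conventions $x[0]=x[M]$, $x[M+1]=x[1]$ shows that the $m$-th component of $\delta^{2}\boldsymbol{x}$ is a linear combination of the entries of $\boldsymbol{x}$ whose coefficients are exactly $T_{m,m} = -2/(\Delta t)^{2}$, $T_{m,m\pm1} = 1/(\Delta t)^{2}$ with indices read modulo $M$, and zero otherwise. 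In particular $\boldsymbol{T}$ is a symmetric circulant matrix, so $\boldsymbol{T}^{T} = \boldsymbol{T}$, which is the only structural property I will actually need.

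Next I would apply the standard transformation law. Linearity of expectation gives $\boldsymbol{\mu}_{2} := \mathbb{E}[\delta^{2}\boldsymbol{x}] = \boldsymbol{T}\,\mathbb{E}[\boldsymbol{x}] = \boldsymbol{T}\boldsymbol{\mu}$. Since $\delta^{2}\boldsymbol{x} - \boldsymbol{\mu}_{2} = \boldsymbol{T}(\boldsymbol{x} - \boldsymbol{\mu})$, the covariance is $\boldsymbol{\Sigma}_{2} = \mathbb{E}\big[\boldsymbol{T}(\boldsymbol{x}-\boldsymbol{\mu})(\boldsymbol{x}-\boldsymbol{\mu})^{T}\boldsymbol{T}^{T}\big] = \boldsymbol{T}\boldsymbol{\Sigma}\boldsymbol{T}^{T} = \boldsymbol{T}\boldsymbol{\Sigma}\boldsymbol{T}$, using $\boldsymbol{T}^{T}=\boldsymbol{T}$. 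To confirm that $\delta^{2}\boldsymbol{x}$ is genuinely Gaussian and not merely a vector whose first two moments coincide with the claimed ones, I would note that for every $\boldsymbol{a}\in\mathbb{R}^{M}$ the scalar $\boldsymbol{a}^{T}(\delta^{2}\boldsymbol{x}) = (\boldsymbol{T}^{T}\boldsymbol{a})^{T}\boldsymbol{x}$ is a linear functional of a Gaussian vector, hence univariate Gaussian; since every one-dimensional projection is Gaussian, $\delta^{2}\boldsymbol{x}$ is multivariate Gaussian by the Cram\'er--Wold characterization. Equivalently, one can compute its characteristic function $\mathbb{E}[e^{i\boldsymbol{a}^{T}\delta^{2}\boldsymbol{x}}] = \exp\!\big(i\boldsymbol{a}^{T}\boldsymbol{T}\boldsymbol{\mu} - \tfrac12\boldsymbol{a}^{T}\boldsymbol{T}\boldsymbol{\Sigma}\boldsymbol{T}\boldsymbol{a}\big)$ directly from that of $\boldsymbol{x}$ and read off the Gaussian form with the asserted parameters.

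I do not expect any serious obstacle here; the entire content is the algebra above. The only points meriting care are the bookkeeping of the periodic indices when writing down $\boldsymbol{T}$, and a remark that the argument assumes nothing about nondegeneracy: $\boldsymbol{\Sigma}_{2}$ may be singular — indeed $\boldsymbol{T}$ annihilates the constant vector $(1,\dots,1)^{T}$, so $\boldsymbol{\Sigma}_{2}$ is always rank-deficient — but this is harmless if degenerate Gaussians are admitted, which is the natural convention. Finally, I would connect this back to the detector-noise application by specializing to $\boldsymbol{x} = \boldsymbol{n}$ with $\boldsymbol{\mu}=\boldsymbol{0}$ and $\boldsymbol{\Sigma}$ the stationary (hence, under the imposed periodicity, circulant) noise covariance, and simultaneously diagonalizing the circulant matrices $\boldsymbol{T}$ and $\boldsymbol{\Sigma}$ in the discrete Fourier basis; the eigenvalue of $\boldsymbol{T}$ at bin $k$ is $(\Delta t)^{-2}\big(2\cos(2\pi k/M) - 2\big)$, and squaring it reproduces the power spectral density factor of Eq.~\eqref{eq:psi4_psd}.
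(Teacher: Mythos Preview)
Your proposal is correct and takes a different route from the paper. Both you and the paper begin by identifying $\delta^{2}\boldsymbol{x}=\boldsymbol{T}\boldsymbol{x}$ with the same circulant $\boldsymbol{T}$, but the paper then writes down the Gaussian density of $\boldsymbol{x}$ and pushes it forward through the change-of-variables formula, computing the Jacobian $|\boldsymbol{T}^{-1}|$ and algebraically massaging the exponent until it reads as a Gaussian with mean $\boldsymbol{T}\boldsymbol{\mu}$ and covariance $\boldsymbol{T}\boldsymbol{\Sigma}\boldsymbol{T}$. Your argument instead computes the mean and covariance directly by linearity and then certifies Gaussianity via one-dimensional projections (or the characteristic function). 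Your route is shorter and, more to the point, strictly more robust: as you yourself note, $\boldsymbol{T}$ annihilates the constant vector and is therefore singular, so the paper's invocation of $\boldsymbol{T}^{-1}$ and of a density for $\delta^{2}\boldsymbol{x}$ is not literally valid under the periodic boundary conditions assumed. Your characteristic-function argument goes through regardless of rank and delivers the degenerate Gaussian directly. The final connection you sketch to the PSD factor via the circulant eigenvalues $(\Delta t)^{-2}\big(2\cos(2\pi k/M)-2\big)$ also reproduces the paper's Lemma in one line, whereas the paper obtains it by a separate time-domain computation of the autocovariance followed by a discrete Fourier transform.
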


\begin{proof}
    The probability density function of the discrete-time stationary Gaussian process is
    \begin{equation}
        p(\boldsymbol{x}) = \frac{1}{(2\pi)^{M/2}|\boldsymbol{\Sigma}|^{1/2}}\exp\left(-\frac{1}{2}\left(\boldsymbol{x}-\boldsymbol{\mu}\right)^{T}\boldsymbol{\Sigma}^{-1}\left(\boldsymbol{x} - \boldsymbol{\mu}\right)\right)\,.
    \end{equation}
    The second difference of $\boldsymbol{x}$ can be regarded as a linear transformation of $\boldsymbol{x}$. Since 
    
    \begin{align}
        \begin{bmatrix}
            (\delta^{2}x)[1] \\
            (\delta^{2}x)[2] \\
            \vdots \\
            (\delta^{2}x)[M]
        \end{bmatrix}
        &= \begin{bmatrix}
                \frac{1}{(\Delta t)^{2}}\left(x[2] - 2x[1] + x[M]\right) \\
                \frac{1}{(\Delta t)^{2}}\left(x[3] - 2x[2] + x[1]\right) \\
                \vdots \\
                \frac{1}{(\Delta t)^{2}}\left(x[1] - 2x[M] + x[M-1]\right)
            \end{bmatrix} \nonumber\\
        &= \frac{1}{(\Delta t)^{2}}
        \begin{bmatrix}
            -2 & 1 & 0 & \cdots & 0 & 1 \\
            1 & -2 & 1 & \cdots & 0 & 0 \\
            \vdots & \vdots & \vdots & \ddots & \vdots & \vdots \\
            1 & 0 & 0 & \cdots & 1 & -2
        \end{bmatrix}
        \begin{bmatrix}
            x[1] \\
            x[2] \\
            \vdots \\
            x[M]
        \end{bmatrix}\,,
    \end{align}
    
    we can write $\delta^{2}\boldsymbol{x} = \boldsymbol{T}\boldsymbol{x}$ where
    \begin{equation}
        \boldsymbol{T} = 
        \frac{1}{(\Delta t)^{2}}
        \begin{bmatrix}
            -2 & 1 & 0 & \cdots & 0 & 1 \\
            1 & -2 & 1 & \cdots & 0 & 0 \\
            \vdots & \vdots & \vdots & \ddots & \vdots & \vdots \\
            1 & 0 & 0 & \cdots & 1 & -2
        \end{bmatrix} \,. 
    \end{equation}
    
    or $T_{j,j} = -2 / (\Delta t)^{2}$, $T_{j,j+1} = T_{j,j-1} = 1 / (\Delta t)^{2}$ and a periodic boundary condition is imposed on the matrix index i.e.\ index $0$ implies index $N$ and index $N+1$ implies index $1$, and otherwise zero. The probability density function of $\delta^{2}\boldsymbol{x}$ is then \\
    
    \begin{widetext}
        \begin{equation}
            \begin{split}
                q(\delta^{2}\boldsymbol{x}) &= p(\boldsymbol{T}^{-1}\delta^{2}\boldsymbol{x})\left|\frac{\partial (\boldsymbol{T}^{-1}\delta^{2}\boldsymbol{x})}{\partial (\delta^{2}\boldsymbol{x})}\right| \\
                                            &=p(\boldsymbol{T}^{-1}\delta^{2}\boldsymbol{x})\left|\boldsymbol{T}^{-1}\right| \\
                                            &=\frac{1}{\left|\boldsymbol{T}\right|} \frac{1}{(2\pi)^{M/2}|\boldsymbol{\Sigma}|^{1/2}}\exp\left(-\frac{1}{2}\left(\boldsymbol{T}^{-1}\delta^{2}\boldsymbol{x}-\boldsymbol{\mu}\right)^{T}\boldsymbol{\Sigma}^{-1}\left(\boldsymbol{T}^{-1}\delta^{2}\boldsymbol{x} - \boldsymbol{\mu}\right)\right) \\
                                            &= \frac{1}{(2\pi)^{M/2}|\boldsymbol{T}\boldsymbol{\Sigma}\boldsymbol{T}|^{1/2}}\exp\left(-\frac{1}{2}\left(\boldsymbol{T}^{-1}\delta^{2}\boldsymbol{x}-\boldsymbol{\mu}\right)^{T}\boldsymbol{\Sigma}^{-1}\left(\boldsymbol{T}^{-1}\delta^{2}\boldsymbol{x} - \boldsymbol{\mu}\right)\right) \\
                                            &= \frac{1}{(2\pi)^{M/2}|\boldsymbol{T}\boldsymbol{\Sigma}\boldsymbol{T}|^{1/2}}\exp\left(-\frac{1}{2}\left(\delta^{2}\boldsymbol{x}-\boldsymbol{T}\boldsymbol{\mu}\right)^{T}(\boldsymbol{T}^{-1})^{T}\boldsymbol{\Sigma}^{-1}\boldsymbol{T}^{-1}\left(\delta^{2}\boldsymbol{x} - \boldsymbol{T}\boldsymbol{\mu}\right)\right) \\
                                            &= \frac{1}{(2\pi)^{M/2}|\boldsymbol{T}\boldsymbol{\Sigma}\boldsymbol{T}|^{1/2}}\exp\left(-\frac{1}{2}\left(\delta^{2}\boldsymbol{x}-\boldsymbol{T}\boldsymbol{\mu}\right)^{T}(\boldsymbol{T}\boldsymbol{\Sigma}\boldsymbol{T}^{T})^{-1}\left(\delta^{2}\boldsymbol{x} - \boldsymbol{T}\boldsymbol{\mu}\right)\right) \\
                                            &= \frac{1}{(2\pi)^{M/2}|\boldsymbol{T}\boldsymbol{\Sigma}\boldsymbol{T}|^{1/2}}\exp\left(-\frac{1}{2}\left(\delta^{2}\boldsymbol{x}-\boldsymbol{T}\boldsymbol{\mu}\right)^{T}(\boldsymbol{T}\boldsymbol{\Sigma}\boldsymbol{T})^{-1}\left(\delta^{2}\boldsymbol{x} - \boldsymbol{T}\boldsymbol{\mu}\right)\right)
            \end{split}
        \end{equation}
    \end{widetext}
    where we have used $\boldsymbol{T}^{T} = \boldsymbol{T}$ since $\boldsymbol{T}$ is a symmetric matrix. Therefore, $\delta^{2}\boldsymbol{x}$ is a discrete-time Gaussian process with mean $\boldsymbol{T}\boldsymbol{\mu}$ and covariance $\boldsymbol{T}\boldsymbol{\Sigma}\boldsymbol{T}$.
\end{proof}

\begin{lemma}
    Let $\boldsymbol{x}\in\mathbb{R}^{M}$ be a stationary discrete-time Gaussian process such that the mean of each discrete point is $\mathbb{E}\left[x[m]\right] = \mu$ and the autocovariance is $K_{XX}[\tau] = \mathbb{E}\left[(x[m] - \mu)( x[m+\tau] - \mu)\right]$, then $\delta^{2}\boldsymbol{x}$ is also a stationary discrete-time Gaussian process with mean $\boldsymbol{\mu}_{2} = \boldsymbol{0}$ and autocovariance $\boldsymbol{K_{XX,2}}$:
    \begin{widetext}
        \begin{equation}
            K_{XX,2}[\tau]\\
            = \frac{1}{(\Delta t)^{4}}
            \left(
                6K_{XX}[\tau] - 
                4K_{XX}[\tau - 1] +
                K_{XX}[\tau - 2] -
                4K_{XX}[\tau + 1] +
                K_{XX}[\tau + 2]
            \right)\,.
        \end{equation}
    \end{widetext}
    The power spectral density $S_{2}[k]$ of $\delta^{2}\boldsymbol{x}$ is related to the power spectral density $S[k]$ of $\boldsymbol{x}$ by
    \begin{equation}
        S_{2}[k] = \frac{1}{(\Delta t)^{4}}\qty(6 - 8\cos(\frac{2\pi k}{M}) + 2\cos(\frac{4\pi k}{M}))S[k]\,.
    \end{equation}
\end{lemma}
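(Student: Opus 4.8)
The plan is to specialise Theorem~\ref{thm:distr_delta2} to the stationary case and then Fourier-transform the resulting autocovariance. Theorem~\ref{thm:distr_delta2} already gives that $\delta^{2}\boldsymbol{x} = \boldsymbol{T}\boldsymbol{x}$ is Gaussian with mean $\boldsymbol{T}\boldsymbol{\mu}$ and covariance $\boldsymbol{T}\boldsymbol{\Sigma}\boldsymbol{T}$, so Gaussianity is inherited immediately. For the mean: stationarity forces $\boldsymbol{\mu} = \mu\,\boldsymbol{1}$ to be the constant vector, and each row of $\boldsymbol{T}$ sums to $(-2+1+1)/(\Delta t)^{2}=0$; hence $\boldsymbol{\mu}_{2}=\boldsymbol{T}\boldsymbol{\mu}=\boldsymbol{0}$. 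What remains is to show the covariance is of stationary (i.e.\ circulant) form with the stated autocovariance, and to read off the power spectral density.

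For the autocovariance I would compute $K_{XX,2}[\tau]=\mathbb{E}\!\left[(\delta^{2}x)[m]\,(\delta^{2}x)[m+\tau]\right]$ directly from the definition (the mean now being zero). Substituting $(\delta^{2}x)[m]=(\Delta t)^{-2}\big(x[m+1]-2x[m]+x[m-1]\big)$ and expanding the product yields nine expectations of the form $\mathbb{E}\big[(x[i]-\mu)(x[j]-\mu)\big]$, each equal to $K_{XX}[j-i]$ by stationarity of $\boldsymbol{x}$. Collecting terms, the lags $\tau$, $\tau\pm1$, $\tau\pm2$ appear with coefficients $6$, $-4$, $1$ respectively, which is exactly the claimed formula for $K_{XX,2}[\tau]$. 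The key point to emphasise is that the result depends on $m$ only through $\tau$ — this is precisely what the periodic boundary condition guarantees, including at the edges of the block — so $\delta^{2}\boldsymbol{x}$ is wide-sense, hence (being Gaussian) strictly, stationary.

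For the power spectral density I would apply the discrete Fourier transform to $K_{XX,2}[\tau]$ and use the shift rule $\sum_{\tau}K_{XX}[\tau-j]\,e^{-2\pi i k\tau/M}=e^{-2\pi i kj/M}S[k]$ for $j\in\{-2,-1,0,1,2\}$, then pair conjugate exponentials into cosines to obtain $S_{2}[k]=(\Delta t)^{-4}\big(6-8\cos(2\pi k/M)+2\cos(4\pi k/M)\big)S[k]$. Equivalently, and more transparently, one notes that $\boldsymbol{T}$ is circulant, hence diagonalised by the DFT basis with eigenvalues $H[k]=(\Delta t)^{-2}\big(e^{2\pi i k/M}-2+e^{-2\pi i k/M}\big)=-2(\Delta t)^{-2}\big(1-\cos(2\pi k/M)\big)$ (consistent with Eq.~\eqref{eq:app_xk}); since $\boldsymbol{\Sigma}$ is likewise circulant by stationarity plus periodic boundary, with eigenvalues $S[k]$, the covariance $\boldsymbol{T}\boldsymbol{\Sigma}\boldsymbol{T}$ is circulant with eigenvalues $|H[k]|^{2}S[k]$, and the elementary identity $4(1-\cos\theta)^{2}=6-8\cos\theta+2\cos2\theta$ delivers the stated form. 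There is no genuine obstacle here beyond careful bookkeeping of the nine cross terms and the five index shifts; the single point worth stating explicitly is the circulant structure, since that is what simultaneously guarantees the $m$-independence of $K_{XX,2}[\tau]$ and the clean multiplicative action $S\mapsto|H|^{2}S$ in the Fourier domain.
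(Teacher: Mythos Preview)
Your proposal is correct and follows essentially the same route as the paper: invoke Theorem~\ref{thm:distr_delta2}, use the vanishing row sums of $\boldsymbol{T}$ for the mean, expand the nine cross terms (the paper does this as the matrix product $\boldsymbol{T}\boldsymbol{\Sigma}\boldsymbol{T}$, you do it as a direct expectation, which is the same computation), and Fourier-transform via the shift rule. Your additional circulant-eigenvalue remark and the identity $4(1-\cos\theta)^{2}=6-8\cos\theta+2\cos2\theta$ are a nice alternative packaging of the PSD step that the paper does not make explicit, but the underlying argument is the same.
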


\begin{proof}
    By Theorem \ref{thm:distr_delta2}, $\delta^{2}\boldsymbol{x}$ is a discrete-time Gaussian process with mean $\boldsymbol{\mu}_{2} = \boldsymbol{T}\boldsymbol{\mu}$ and covariance $\boldsymbol{\Sigma}_{2} = \boldsymbol{T}\boldsymbol{\Sigma}\boldsymbol{T}$. The mean of $\delta^{2}\boldsymbol{x}$ is then
    \begin{equation}
        \boldsymbol{\mu}_{2} =
        \boldsymbol{T}\boldsymbol{\mu}
        =
        \frac{1}{(\Delta t)^{2}}
        \begin{bmatrix}
            \mu - 2\mu + \mu \\
            \mu - 2\mu + \mu \\
            \vdots \\
            \mu - 2\mu + \mu \\
        \end{bmatrix} \\
        =\boldsymbol{0}\,.
    \end{equation}
    The covariance matrix is
    \begin{widetext}
        \begin{equation}
            \begin{split}
            &\Sigma_{2(n,m)} \\
            &= \sum_{k,l=1}^{N}T_{n,k}\Sigma_{k,l}T_{l,m} \\
            & = \frac{1}{(\Delta t)^{2}}\sum_{l=1}^{N}(\Sigma_{n-1,l} - 2\Sigma_{n,l} + \Sigma_{n+1,l})T_{l,m} \\
            &=\frac{1}{(\Delta t)^{4}}\left[(\Sigma_{n-1,m-1} - 2\Sigma_{n-1,m} + \Sigma_{n-1,m+1})
                -2(\Sigma_{n,m - 1} - 2\Sigma_{n,m} + \Sigma_{n,m + 1}) + 
                (\Sigma_{n+1,m-1} - 2\Sigma_{n+1,m} + \Sigma_{n+1,m+1})
            \right] \\
            &=\frac{1}{(\Delta t)^{4}}
            \left(
                6K_{XX}[|n - m|] - 
                4K_{XX}[|n - m - 1|] +
                K_{XX}[|n - m - 2|] -
                4K_{XX}[|n - m + 1|] +
                K_{XX}[|n - m + 2|]
            \right)
            \end{split}
        \end{equation}
    \end{widetext}
    which depends on the time difference only. $\delta^{2}\boldsymbol{x}$ is therefore also a stationary discrete-time Gaussian process. The autocovariance of $\delta^{2}\boldsymbol{x}$ is therefore
    \begin{widetext}
        \begin{equation}
            K_{XX,2}(\tau)\\
            =
            \frac{1}{(\Delta t)^{4}}
            \left(
                6K_{XX}[\tau] - 
                4K_{XX}[\tau - 1] +
                K_{XX}[\tau - 2] -
                4K_{XX}[\tau + 1] +
                K_{XX}[\tau + 2]
            \right)\,.
        \end{equation}
    \end{widetext}
    Since the power spectral density $S[k]$ is related to the autocorrelation function $R_{XX}[m]$ by
    \begin{equation}
        S[k] = \sum_{m=1}^{M}R_{XX}[m]e^{-i2\pi mk/M}\,,
    \end{equation}
    the power spectral density of the stochastic process $\delta^{2}\boldsymbol{x}$ is therefore
    \begin{widetext}
        \begin{equation}
            \begin{split}
            &S_{2}[k] \\
            &= \sum_{m=1}^{M}R_{XX,2}[m]e^{-i2\pi mk/M}\\
            &= \sum_{m=1}^{M}(K_{XX,2}[m] + \mu_{2}\mu_{2})e^{-i2\pi mk/M}\\
            &= \sum_{m=1}^{M}K_{XX,2}[m] e^{-i2\pi mk/M}\\
            &=
            \frac{1}{(\Delta t)^{4}}
            \sum_{m=1}^{M}\left(
                6K_{XX}[m] - 
                4K_{XX}[m - 1] +
                K_{XX}[m - 2] -
                4K_{XX}[m + 1] +
                K_{XX}[m + 2]
            \right)
            e^{-i2\pi mk/M} \\
            &=\frac{1}{(\Delta t)^{4}}
            (6S[k] - 4S[k]e^{-i2\pi k/M} + S[k]e^{-i4\pi k/M} - 4S[k]e^{i2\pi k/M} + S[k]e^{i4\pi k/M}) \\
            &=\frac{1}{(\Delta t)^{4}}(6 - 8\cos(2\pi k/M) + 2\cos(4\pi k/M))S[k]
            \end{split}
        \end{equation}
    \end{widetext}
    where $S[k]$ is the power spectral density of the stochastic process $\boldsymbol{x}$.
\end{proof}

\section{Appendix: Parameters of simulated signals}

Table \ref{tab:maxl_proca} shows the parameters of the two injections discussed in Section VI. 

\begin{table}[t!]
\centering
\begin{center}
\renewcommand{\arraystretch}{1.5}
\begin{tabular}{l|cc|cc}
Parameter  & GW190521 & S200114f \\ \hline

Total red-shifted mass $[M_\odot]$  & $269.69$ & $234.31$ 
\\
Inclination $\theta_{JN}$ [rad] & $2.34$ & $1.04$ \\
Azimuth $\varphi$ & $5.11$ & $3.26$ 
\\
Luminosity distance [Mpc] & $246.10$  & $65.71$ & 
\\
Polarization $\psi$ & $1.23$ & $1.51$ 
\\
Right ascension $\alpha$ & $3.94$ & $1.95$ 
\\
Declination $\delta$ & $0.93$ & $0.08$ 
\\
Primary field frequency $\omega_1/\mu_{\rm B}$  & $0.9000$ &   $0.8800$
\\
Secondary field frequency $\omega_2/\mu_{\rm B}$  & $0.8550$  & $0.8325$

\end{tabular}
\end{center}
\caption{\textbf{Parameters of the synthetic signals analysed in section VI.C}. We quote the inclination in terms of the angle between the line-of-sight and the total angular momentum $\theta_{JN}$ as well as the azimuthal angle of the observer $\varphi$ (see Appendix I in \cite{Bustillo:2021proca1}).}
\label{tab:maxl_proca}
\end{table}

\bibliography{psi4_formalism.bib}

\end{document}